\newtheorem{theorem}{Theorem}[section]
\newtheorem{lemma}[theorem]{Lemma}
\newtheorem{proposition}[theorem]{Proposition}
\newtheorem{corollary}[theorem]{Corollary}
\newtheorem{definition}{Definition}[section]
\newenvironment{remark}[1][Remark]{\begin{trivlist}
\item[\hskip \labelsep {\bfseries #1}]}{\end{trivlist}}
\newcommand{\h}{\ensuremath{\mathcal{H}}}
\newcommand{\R}{\mathbb{R}}
\newcommand{\id}{\ensuremath{\mathds{1}}}
\begin{document}
\title{Constrained Quantum Tomography of Semi-Algebraic Sets with Applications to Low-Rank Matrix Recovery}
\author{Michael Kech}
\email{kech@ma.tum.de}
\affiliation{Department of Mathematics, Technische Universit\"{a}t M\"{u}nchen, 85748 Garching, Germany}
\author{Michael M. Wolf}
\email{wolf@ma.tum.de}
\affiliation{Department of Mathematics, Technische Universit\"{a}t M\"{u}nchen, 85748 Garching, Germany}

\date{\today}

\begin{abstract}

We analyze quantum state tomography in scenarios where measurements and states are both constrained. States are assumed to live in a semi-algebraic subset of state space and measurements are supposed to be rank-one POVMs, possibly with additional constraints. Specifically, we consider sets of von Neumann measurements and sets of local observables. We provide upper bounds on the minimal number of measurement settings or outcomes that are required for discriminating all states within the given set. The bounds exploit tools from real algebraic geometry and lead to generic results that do not only show the existence of good measurements but guarantee that almost all measurements with the same dimension characteristic perform equally well.

In particular, we show that on an $n$-dimensional Hilbert space any two states of a semi-algebraic subset can be discriminated by $k$ generic von Neumann measurements if $k(n-1)$ is larger than twice the dimension of the subset. In case the subset is given by states of rank at most $r$, we show that $k$ generic von Neumann measurements suffice to discriminate any two states provided that $k(n-1)>4r(n-r)-2$. {We obtain corresponding results for low-rank matrix recovery of hermitian matrices in the scenario where the linear measurement mapping is induced by tight frames.}

\end{abstract}
\keywords{quantum tomography, semi-algebraic sets, low-rank matrix recovery}

\maketitle

\tableofcontents

\section{Introduction}
{
Let us note in the beginning that the reader mainly interested in low-rank matrix recovery can find our corresponding results in Section IV. There we find linear measurement mappings induced by tight frames that can discriminate any two matrices of rank at most $r$.
}

Quantum state tomography, which aims at identifying quantum states from the outcomes of an experiment, is a central task in quantum information science. Full state tomography is often challenging and sometimes infeasible. However, if there is some prior information about the state under investigation, this can considerably simplify the problem: the number of measurement settings necessary to uniquely identify a given state can significantly decrease if the state is not arbitrary but is known to lie on a confined subset of state space.

Using topological properties of the measurement map and the constrained set, lower bounds on the  minimal number of measurement settings necessary to discriminate any two pure states were obtained in \cite{heinosaari2013quantum}. Relating these topological features of the measurement map to stability properties, it was shown in \cite{kech} that under the premise of stability the approach of \cite{heinosaari2013quantum} can be generally applied. Using this result, lower bounds on the necessary number of measurement settings for several other subsets were obtained in \cite{kech}.

The present article deals with the issue of finding upper bounds: given a subset of state space, find a measurement scheme that can discriminate any two states of this subset with as few measurement settings as possible. This appears to be a rather hard problem in general. Already in the case of pure state quantum tomography it has received significant attention in  topology \cite{milgram1967immersing,mayer}, quantum information science \cite{weigert1992pauli,amiet1999reconstructing,amiet1998reconstructing2,finkelstein2004pure,flammia2005minimal,heinosaari2013quantum,gross2010quantum,mondragon2013determination,CarmeliTeikoJussi1,CarmeliTeikoJussi2,flammia2005minimal,QCS2} and sampling theory \cite{balan2006signal,conca2014algebraic,bodmann2013stable,GrossPhaseLift}. 

In addition to constraining the set of states, we also restrict the set of measurements in order to capture the fact that arbitrary measurements may not be feasible in an experiment. The imposed constraints could for example be the restriction to von Neumann measurements or to local measurements when dealing with a multipartite system. The case of pure state tomography with von Neumann measurements was addressed in \cite{mondragon2013determination,jaming2010uniqueness,carmeli2015many}. In \cite{mondragon2013determination,jaming2010uniqueness} it was shown that any two pure states can be discriminated by merely $4$ von Neumann measurements. This is known to be sharp for pure states of an $n$-dimensional Hilbert space if $n>4$ and \cite{carmeli2015many} has a special focus on the cases $n\leq 4$. The more general setting of low-rank matrix recovery with restricted measurements was considered in \cite{kueng2014low}. However, their focus is to determine the asymptotic behaviour, and this allows us to improve on some of their results.

We propose a method that can deal with these problems rather generally and we then apply it to different scenarios.

In this article we neither consider the statistical aspects of quantum tomography nor the algorithmic problem of reconstructing the state from the measurement data. 

\textit{Outline.} In Section II we fix notation, introduce measurement schemes that are relevant in the following and give some preliminary results about hermitian matrices of bounded rank. Furthermore, we illustrate the connection between phase retrieval and quantum tomography.

In Section III,  we propose a method to find sets of measurements that can discriminate any two states of a given subset of the state space, generalizing the approach taken in \cite{balan2006signal} to find frames for the phase retrieval problem. The method can be applied to all semi-algebraic subsets and it can naturally deal with constrained measurement like e.g. von Neumann measurements. Rather than giving explicit constructions, the method asserts that almost all sets of measurement that fulfil certain constraints allow for a unique identification. 

{
In Section IV, we apply this procedure  to low-rank matrix recovery, showing that a generic frame with $m>4r(n-r)$ frame vectors can discriminate any two hermitian matrices of rank at most $r$. This generalizes \cite{conca2014algebraic} where the case $r=1$ was considered. In addition we shown that the statement also holds when restricting to tight frames.}

In Section V, we prove that under a further condition the sets of measurements obtained by the method introduced in Section III fulfil the stability property introduced in \cite{kech}. In the scenarios where the method is feasible this condition is satisfied and therefore the stability property holds rather generally.

In Section VI, we present the main result of this paper. Loosely speaking, it asserts that one can perform tomography on all semi-algebraic subsets of the state space by measuring sets of positive operator valued measures (POVMs) that consist exclusively of rank one operators, in particular von Neumann measurements. From this result we straightforwardly obtain Whitney type embedding results for these measurement schemes. Furthermore, we consider the problem of discriminating states of bounded rank: In \cite{heinosaari2013quantum,kech} lower bounds on the number of measurement outcomes necessary to uniquely identify quantum states with bounded rank were established and these lower bounds turned out to be close to the upper bounds obtained in \cite{heinosaari2013quantum} where it was shown that $4r(n-r)$ measurement outcomes suffice in order to identify states of an $n$-dimensional system with rank at most $r$.  However, the measurement that does realize this upper bound has a rather complicated structure. We prove that the same upper bounds as in \cite{heinosaari2013quantum} can be realized when measuring a POVM which exclusively consist of rank one operators and we prove similar results for measuring sets of von Neumann measurements. Note that our results come with less measurement outcomes than the compressed sensing approach of \cite{gross2010quantum}, however we do not provide a tractable reconstruction procedure.

Section VII deals with the problem of reconstructing states of multipartite systems from the expectation values of local observables. Just like in Section V, we first give a theorem stating that one can do tomography on all semi-algebraic subsets of the state-space by performing measurements of this type. Then we obtain Whitney type embedding results and also for the problem of identifying states of bounded rank we obtain corresponding results.

In Section VIII, proofs of technical results are given.

Most of our results assert that almost all measurements have a certain property. In the Appendix we discuss the measure with respect to which this is true.

\section{Preliminaries}

Throughout $\mathcal{H}$ denotes a finite-dimensional complex Hilbert space. $H(\mathcal{H})$ denotes the real vector space of hermitian operators\footnote{{ We denote the adjoint of a linear operator $B:\mathcal{H}_1\to\mathcal{H}_2$ by $B^\dagger$.}} on $\mathcal{H}$ and $\mathcal{S}(\mathcal{H})$ denotes the set of quantum states on $\mathcal{H}$, i.e. $\mathcal{S}(\mathcal{H})=\{\varrho\in H(\mathcal{H}):\varrho\geq 0, \text{tr}(\varrho)=1\}$. We regard $H(\h)$ as an inner product space, equipping it with the Hilbert-Schmidt inner product. The Hilbert Schmidt norm is denoted by $\|\cdot\|_2$. By $SH(\h):=\{X\in H(\h):\ \|X\|_2^2=\text{tr}(X^2)=1\}$ we denote the unit sphere in $H(\h)$. Furthermore, for a subset $A\subseteq H(\h)$, $\Delta(A)$ denotes the set of differences of operators in $A$, i.e. $\Delta(A)=\{X-Y:X,Y\in A\}$.  $M(m,n,\mathbb{C})\ (M(m,n,\R))$ denotes the set of complex (real) $m\times n$ matrices and we write $M(n,\mathbb{C})\ (M(n,\R))$ as shorthand for $M(n,n,\mathbb{C})\ (M(n,n,\R))$. 

{
In the following, measurements are modelled by linear mappings form the set of hermitian operators (respectively hermitian matrices) to $\mathbb{R}^m$, where $m$ is the number of measurement outcomes.
\begin{definition}[Measurement map]
A linear mapping $h:H(\mathcal{H})\to\mathbb{R}^m$ is called a measurement map. The number of outcomes of $h$ is $m$.
\end{definition}
}

\subsection*{(Constrained) Measurements in Quantum Mechanics}
{ In this section we focus on the specific  measurement maps that typically arise in quantum mechanics.}
In quantum mechanics POVMs are used to describe general measurements \cite{holevo2011probabilistic, busch1995operational}. For the purpose of this article a POVM on $\h$ is a tuple $P=(Q_{1},\hdots,Q_{m})$ of positive semidefinite operators on $\h$ such that
\begin{align*}
\sum_{i=1}^{m}Q_{i}=\id_{\mathcal{H}}.
\end{align*}
An element of $P$ is called an effect operator. We define the dimension of $P$ by $\dim P:=|P|$. 

A whole measurement scheme might consist of measuring more than one POVM.
\begin{definition}
A measurement scheme on $\h$ is a tuple $M=(P_1,\hdots,P_k)$ of POVMs on $\h$. We define the dimension of $M$ by $\dim M:=\dim P_1+\hdots+\dim P_k$. 
\end{definition} 
A POVM $P$ can be identified with the measurement scheme that just contains $P$. In the following we sometimes make use of this identification and regard POVMs as measurement schemes.

A POVM $P=(Q_{1},\hdots,Q_{m})$ induces a measurement map
\begin{align*}
h_{P}:H(\mathcal{H})&\to \mathbb{R}^{m} \\
  X&\mapsto\big( \text{tr}(Q_{1}X),\hdots,\text{tr}(Q_{m}X) \big).
\end{align*}
Similarly a measurement scheme $M=(P_1,\hdots,P_k)$ induces a measurement map
\begin{align*}
h_{M}:H(\mathcal{H})&\to \mathbb{R}^{|P_1|+\hdots+|P_k|} \\
  X&\mapsto\big( h_{P_1}(X),\hdots,h_{P_k}(X) \big).
\end{align*}

\begin{definition}\label{defcomplete}
A measurement scheme $M$ is called $\mathcal{R}$-complete for a subset $\mathcal{R}\subseteq \mathcal{S}(\mathcal{H})$ if $h_{M}|_{\mathcal{R}}$ is injective.
\end{definition}

Our main results are statements about rank one POVMs and von Neumann measurements, so let us define these terms: A POVM $P$ is called rank one POVM if all effect operators are of rank one. We denote the set of $m$-dimensional rank one POVMs on $\h$ by $\mathcal{M}_1^m(\h)$. In the following we implicitly assume that $m\geq \dim\h$ because otherwise $\mathcal{M}_1^m(\h)$ would be empty.

Later on we often use the following correspondence between linear isometries and $\mathcal{M}_1^m(\mathbb{C}^{n})$: The equations 
\begin{align*}
M^\dagger M=\id_n,\ M\in M(m,n,\mathbb{C}),
\end{align*}
can be considered as real algebraic equations under the identification $M(m,n,\mathbb{C})\simeq \R^{2nm}$. The solution set $U(m,n)$ is the set of linear isometries $U:\mathbb{C}^n\to\mathbb{C}^m$. Note that $U(m,n)$ is non-empty if and only if $m\geq n$ and that for $n=m$ it is the set of unitaries. We write $U(n)$ as shorthand for $U(n,n)$.

Let $\{e_i\}_{i\in\{1,\hdots,m\}}$ be the standard basis of $\mathbb{C}^m$. Then, the sought correspondence is given by the map
\begin{align}\label{phi}
\begin{split}
\phi:U(m,n)&\to\mathcal{M}_{1}^m(\mathbb{C}^{n})\\
 U&\mapsto (U^\dagger e_1e_1^\dagger U,\hdots,U^\dagger e_1e_1^\dagger U).
 \end{split}
\end{align}

If the effect operators of a POVM are projections on mutually orthogonal subspaces, the POVM is called von Neumann measurement. In this article, we just deal with rank one von Neumann measurements and therefore, in the following, the term von Neumann measurement always refers to rank one von Neumann measurements. Note, that the set of rank one von Neumann measurements is precisely the set of $(\dim{\h})$-dimensional rank one POVMs.

The measurement scheme consisting of $k$ $m$-dimensional rank one POVMs on $\h$ is denoted by $\mathcal{M}^m_{1,k}(\h)$, i.e.
\begin{align*}
\mathcal{M}^m_{1,k}(\h)=\{(P^1,\hdots,P^k): P^i\in\mathcal{M}_{1}^{m}(\h)\}.
\end{align*}
For $m=\dim\h$ this is the set of $k$ rank one von Neumann measurements which, we denote by
$\mathcal{M}^k_{\text{vN}}(\h)$.

\subsection*{Hermitian Matrices of Bounded Rank}
In this section we prove a lemma about hermitian operators with bounded rank, which is frequently used in the following. Denote by $\mathcal{P}_r(\h)$ the set of hermitian operators on $\h$ with rank at most $r$, i.e. $\mathcal{P}_r(\h):=\{X\in H(\h): \text{rank}(X)\leq r\}$. We write $\mathcal{P}_{r}^{n}$ as shorthand for $\mathcal{P}_{r}(\mathbb{C}^{n})$.
 
\begin{lemma}\label{lemrank}
$\mathcal{P}_{r}^{n}$ is a real algebraic set of dimension $r(2n-r)$.
\end{lemma}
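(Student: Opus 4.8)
The plan is to treat the two claims—that $\mathcal{P}_{r}^{n}$ is a real algebraic set and that its dimension equals $r(2n-r)$—separately, the second being the substantial one.

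For algebraicity, I would use the elementary characterization that a matrix has rank at most $r$ precisely when all of its $(r+1)\times(r+1)$ minors vanish. Identifying $H(\mathbb{C}^n)$ with $\R^{n^2}$ by taking as coordinates the $n$ real diagonal entries together with the real and imaginary parts of the $\binom{n}{2}$ entries above the diagonal, every entry of $X$ becomes a real-linear function of these coordinates. Each $(r+1)$-minor is then a complex polynomial in the coordinates, and its vanishing is equivalent to the simultaneous vanishing of its real and imaginary parts, both real polynomials. Hence $\mathcal{P}_{r}^{n}$ is the common zero locus of finitely many real polynomials, i.e. a real algebraic set.

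For the dimension I would first analyse the stratum $\mathcal{P}_{=r}^{n}$ of matrices of rank exactly $r$ and show it is a smooth manifold of dimension $r(2n-r)$. The enabling observation is that a Hermitian matrix of rank $r$ always has a nonsingular principal $r\times r$ submatrix: writing $X=U\Lambda U^\dagger$ with $U$ an isometry and $\Lambda$ invertible, a principal $r\times r$ block of $X$ is nonsingular iff the corresponding $r$ rows of $U$ are independent, and such rows exist because $U$ has rank $r$. Reordering the basis so this block is the top-left corner, I write $X=\left(\begin{smallmatrix} A & B\\ B^\dagger & D\end{smallmatrix}\right)$ with $A$ invertible Hermitian; the Schur complement shows that, locally where $A$ stays invertible, $\text{rank}(X)=r$ iff $D=B^\dagger A^{-1}B$. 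Thus $(A,B)\mapsto\left(\begin{smallmatrix} A & B\\ B^\dagger & B^\dagger A^{-1}B\end{smallmatrix}\right)$ is a local chart for $\mathcal{P}_{=r}^{n}$, with $A$ ranging over invertible Hermitian $r\times r$ matrices ($r^2$ real parameters) and $B$ over $M(r,n-r,\mathbb{C})$ ($2r(n-r)$ real parameters), giving $\dim\mathcal{P}_{=r}^{n}=r^2+2r(n-r)=r(2n-r)$.

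Finally I would pass from this stratum to the whole set. Every Hermitian matrix of rank $s<r$ is a limit of rank-exactly-$r$ matrices (perturb by a small Hermitian operator of rank $r-s$ supported on the kernel), so $\mathcal{P}_{r}^{n}=\overline{\mathcal{P}_{=r}^{n}}$. Since the dimension of a real (semi)algebraic set is the largest dimension attained at a smooth point, and the same Schur-complement argument gives $\dim\mathcal{P}_{=s}^{n}=s(2n-s)<r(2n-r)$ for $s<r\le n$ (the map $s\mapsto s(2n-s)$ being strictly increasing on $[0,n]$), the maximum is attained on $\mathcal{P}_{=r}^{n}$ and equals $r(2n-r)$. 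I expect the main obstacle to lie in this dimension half rather than in the algebraicity: one must ensure the rank-exactly-$r$ locus is genuinely a manifold of the stated dimension at \emph{every} one of its points—which is exactly what the nonsingular-principal-submatrix fact guarantees—and that the dimension of the ambient algebraic set can legitimately be read off from its top stratum.
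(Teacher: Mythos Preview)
Your proof is correct and takes a genuinely different route from the paper's. For algebraicity both arguments are essentially the same (vanishing of $(r+1)$-minors), but for the dimension the paper proceeds via the spectral decomposition: it introduces the semi-algebraic set $V_r^n$ of orthonormal $r$-tuples of rank-one projections, identifies it with the homogeneous space $U(n)/(U(n-r)\times U(1)^r)$ to read off $\dim V_r^n=r(2n-r)-r$, and then uses the surjective semi-algebraic map $(\lambda_1,\dots,\lambda_r,P_1,\dots,P_r)\mapsto\sum_i\lambda_iP_i$ from $\R^r\times V_r^n$ onto $\mathcal{P}_r^n$ to bound the dimension above and below. Your argument instead parametrizes the top stratum $\mathcal{P}_{=r}^n$ locally by the Schur-complement chart $(A,B)\mapsto\left(\begin{smallmatrix}A&B\\B^\dagger&B^\dagger A^{-1}B\end{smallmatrix}\right)$, which is more elementary (pure linear algebra, no Lie-group quotients) and yields explicit coordinates. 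The paper's approach, on the other hand, is closer in spirit to the spectral constraints used immediately afterwards (Corollary~\ref{corrank}) and makes the eigenvalue parameters visible, which is convenient for imposing the trace and norm conditions there. Either method is standard for determinantal varieties; yours would require you to redo the stratification count to get Corollary~\ref{corrank}, whereas the paper's map $\eta$ adapts by simply restricting the $\lambda$-domain.
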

\begin{proof}
First note that $\mathcal{P}_{r}^{n}$ is a real algebraic set: It is given by the set of points $X\in M(n,\mathbb{C})$ for which all $(r+1)\times (r+1)$-minors vanish and that satisfy $X=X^{\dagger}$. These conditions turn into a set of real algebraic equations under the canonical identification $M(n,\mathbb{C})\simeq\mathbb{R}^{2n^2}$.

To determine the dimension of $\mathcal{P}_{r}^{n}$ consider the semi-algebraic set $V_{r}^{n}=\{(P_{1},\hdots,P_{r}):P_i\in\mathcal{P}_{1}^{n},\ \text{tr}(P_{i}P_{j})=\delta_{ij},\ P_i\geq 0\}$ \footnote{A hermitian matrix is positive semidefinite if and only if all of its principal minors are greater than of equal to zero. Thus, the equations $P_i\geq 0$ can be regarded as algebraic inequalities.}. The dimension of $V_{r}^{n}$ is given by $r(2n-r)-r$. To see this, consider the smooth and transitive action of $U(n)$ on the complex matrices $M(n,\mathbb{C})$ given by $(U,M)\to (U,UMU^\dagger)$ and let $V_D$ be the orbit of the diagonal matrix $D:=\text{diag}(r,r-1,\hdots,1,0,\hdots)$ under this action. Noting that the stabilizer subgroup of $D$ is $U(n-r)\times U(1)^r$ we obtain $V_D\simeq U(n)/(U(n-r)\times U(1)^r)$ by Theorem 3.62 of \cite{warner1971foundations}. But the semi-algebraic map $\psi:V_r^n\to V_D,\ (P_{1},\hdots,P_{r})\mapsto \sum_{j=1}^rjP_j$ is clearly bijective. Hence we find $\dim V_r^n=\dim(U(n)/(U(n-r)\times U(1)^r))=n^2-(n-r)^2-r=r(2n-r)-r$ by Theorem 2.8.8 and Proposition 2.8.14 of \cite{bochnak1998real}.


The semi-algebraic map
\begin{align}\label{1}
\begin{split}
\eta:\R^r\times V_{r}^{n}&\to\mathcal{P}_{r}^{n}\\
(\lambda_1,\hdots,\lambda_r,P_{1},\hdots,P_{r})&\mapsto \sum_{i=1}^{r} \lambda_i P_i.
\end{split}
\end{align}
is clearly surjective. By Theorem 2.8.8 of \cite{bochnak1998real}, we hence conclude that $\dim \mathcal{P}_{r}^{n}\leq\dim V_{r}^{n}+r=r(2n-r)$ and furthermore that indeed $\dim \mathcal{P}_{r}^{n}=r(2n-r)$ by noting that $\phi$ is injective if we require $\lambda_1>\hdots>\lambda_r>0$.
\end{proof}
\begin{corollary}\label{corrank}
The set $\mathcal{D}_1:=\{X\in\mathcal{P}_{r}^{n}:\ \text{tr}(X^2)=2\}$ is a real algebraic set of dimension $r(2n-r)-1$ and the set $\mathcal{D}_2:=\{X\in\mathcal{P}_{r}^{n}:\ \text{tr}(X^2)=2,\ \text{tr}(X)=0\}$ is a real algebraic set of dimension $r(2n-r)-2$.
\end{corollary}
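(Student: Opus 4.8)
The plan is to deduce Corollary~\ref{corrank} directly from Lemma~\ref{lemrank} by viewing $\mathcal{D}_1$ and $\mathcal{D}_2$ as intersections of the real algebraic set $\mathcal{P}_r^n$ with the hypersurfaces cut out by the polynomial equations $\text{tr}(X^2)=2$ and $\text{tr}(X)=0$. First I would observe that both extra conditions are manifestly real algebraic: under the identification $M(n,\mathbb{C})\simeq\mathbb{R}^{2n^2}$, the map $X\mapsto\text{tr}(X^2)$ is a real quadratic polynomial in the entries and $X\mapsto\text{tr}(X)$ is linear, so $\mathcal{D}_1$ and $\mathcal{D}_2$ are real algebraic sets, being finite intersections of real algebraic sets.

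For the dimension count I would exhibit a surjective semi-algebraic map analogous to $\eta$ in \eqref{1} and then read off the dimension using Theorem 2.8.8 of \cite{bochnak1998real}, exactly as in the proof of Lemma~\ref{lemrank}. Concretely, for $\mathcal{D}_1$ the constraint $\text{tr}(X^2)=2$ imposed on $X=\sum_i\lambda_i P_i$ with $(P_1,\dots,P_r)\in V_r^n$ becomes $\sum_i\lambda_i^2=2$, since the $P_i$ are orthonormal in the Hilbert-Schmidt inner product. Thus one restricts the $\mathbb{R}^r$ factor in the domain of $\eta$ to the sphere $\{\lambda\in\mathbb{R}^r:\sum_i\lambda_i^2=2\}$, which is $(r-1)$-dimensional, and obtains a surjection $S^{r-1}\times V_r^n\to\mathcal{D}_1$. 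This gives $\dim\mathcal{D}_1\leq(r-1)+\dim V_r^n=(r-1)+(r(2n-r)-r)=r(2n-r)-1$, with the matching lower bound coming from injectivity of the map on the locus $\lambda_1>\dots>\lambda_r>0$, just as before. For $\mathcal{D}_2$ one additionally imposes $\text{tr}(X)=\sum_i\lambda_i=0$, cutting the sphere down to an $(r-2)$-dimensional set and yielding $\dim\mathcal{D}_2=r(2n-r)-2$.

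The main obstacle I anticipate is not conceptual but bookkeeping: one must verify that imposing these constraints genuinely drops the dimension by exactly the expected amount rather than merely bounding it from above. The upper bound follows automatically from Theorem 2.8.8 of \cite{bochnak1998real} applied to the surjective map, but for the lower bound one needs the restriction of the map to an open subset (the region $\lambda_1>\dots>\lambda_r>0$ intersected with the appropriate sphere or its equatorial slice) to be injective with image of full dimension, so that the dimension of the domain transfers to the dimension of the image. One should check that these open constraint loci are nonempty and of the claimed dimensions, which is clear: the sphere intersected with the open cone $\{\lambda_1>\dots>\lambda_r>0\}$ is a nonempty open piece of $S^{r-1}$, and intersecting further with the hyperplane $\sum_i\lambda_i=0$ still leaves a nonempty relatively open subset of dimension $r-2$ whenever $r\geq 2$. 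The remaining care is simply to confirm that the orthonormality relations in $V_r^n$ make the two reductions independent, so that the linear and quadratic constraints on the eigenvalue vector $\lambda$ account for the full dimension drop.
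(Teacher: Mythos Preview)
Your approach is exactly the one the paper takes: restrict the $\mathbb{R}^r$ factor in the domain of $\eta$ to the sphere (for $\mathcal{D}_1$) and to the sphere intersected with the hyperplane $\sum_i\lambda_i=0$ (for $\mathcal{D}_2$), then argue as in Lemma~\ref{lemrank}.

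There is, however, a concrete slip in your lower-bound argument for $\mathcal{D}_2$. You claim that the open cone $\{\lambda_1>\dots>\lambda_r>0\}$ intersected with the sphere and the hyperplane $\sum_i\lambda_i=0$ is nonempty. It is not: strictly positive reals cannot sum to zero. So the injectivity locus you copied from Lemma~\ref{lemrank} is empty once you pass to the traceless slice, and your lower bound for $\dim\mathcal{D}_2$ does not go through as written.

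The fix is immediate. Injectivity of $\eta$ does not require positivity of the $\lambda_i$; it only requires that the $\lambda_i$ be pairwise distinct and nonzero, so that $X=\sum_i\lambda_iP_i$ has exactly the $\lambda_i$ as its nonzero eigenvalues with the $P_i$ as the corresponding rank-one eigenprojections. Replace the open cone by $\{\lambda_1>\dots>\lambda_r,\ \lambda_i\neq 0\ \forall i\}$. This set meets the sphere $\sum_i\lambda_i^2=2$ and the hyperplane $\sum_i\lambda_i=0$ in a nonempty open piece of dimension $r-2$ for every $r\geq 2$ (for instance $\lambda_1=1,\lambda_2=-1$ when $r=2$), and your argument then goes through verbatim.
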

\begin{proof}
From the proof of  Lemma \ref{lemrank} it is immediate that both $\mathcal{D}_1$  and $\mathcal{D}_2$  are real algebraic sets. To determine the dimension of $\mathcal{D}_1$, one can go along the lines of the proof of Lemma \ref{lemrank} and simply replace $\R^n$ by the unit sphere $S^{n-1}$ in the definition of the mapping $\eta$. Similarly, to determine the dimension $\mathcal{D}_2$, one can go along the lines of the proof of Lemma \ref{lemrank} and this time replace $\R^n$ by $\{x\in S^{n-1}:\sum_{i=1}^nx_i=0\}$ in the definition of the mapping $\eta$.
\end{proof}
\subsection*{Frames and Rank One POVMs}

Finally, we discuss the connection between pure state tomography and the phase retrieval problem in sampling theory. A finite set $F=\{v_1,\hdots,v_m\}$ of vectors in $\mathbb{C}^n$ is called a frame if there exist constants $a,b>0$ such that
\begin{align}\label{eqframe}
a\|x\|_2^2\leq\sum_{i=1}^{m}|\langle x,v_i\rangle|^2\leq b\|x\|_2^2\ \text{for all}\ x\in\mathbb{C}^n.
\end{align}
A frame $F=\{v_1,\hdots,v_m\}$ induces a measurement map
\begin{align}\label{measframe}
\begin{split}
M_F:\mathbb{C}^n\delimiter"502F30E\mathopen{{\sim}}&\to \R^m\\
[x]&\mapsto (|\langle v_1,x\rangle|^2,\hdots,|\langle v_m,x\rangle|^2)
\end{split}
\end{align}
where $x\sim y$ iff there is a $\lambda\in\R$ such that $x=e^{i\lambda}y$ \footnote{Note that $M_F$ is also well-defined for $F=\{v_1,\hdots,v_m\}$ with $v_i\in\mathbb{C}^n$, i.e. if we do not require $F$ to be a frame.}.  Since the task in phase retrieval is to reconstruct signals modulo phase from intensity measurements, one considers frames $F$ such that $M_F$ is injective.

Each frame $F=\{v_1,\hdots,v_m\}$ induces a map
\begin{align}\label{map}
\begin{split}
h_{F}:H(\mathbb{C}^n)&\to\R^m\\
X&\to(\text{tr}(Xv_1 v_1^\dagger),\hdots,\text{tr}(X v_m v_m^\dagger)).
\end{split}
\end{align}
Noting that $h_{F}(x x^\dagger)=M_F(x)$, we conclude that $h_{P_F}|_{\mathcal{P}^n_1}$ is injective if and only if $M_F$ is injective. 

A corollary of one of our main results is a statement about tight frames, so let us define this term. A frame $F$ is called tight frame if $a=b$ in inequality \eqref{eqframe}. If in addition $a=b=1$, $F$ is called tight frame.

The following proposition shows the well-known fact that tight frames correspond to rank one POVMs.
\begin{proposition}\label{proppars}
Let $F$ be a tight frame. Then the associated set of rank one operators $P_F$ is a POVM.
\end{proposition}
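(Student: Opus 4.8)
The plan is to verify the two defining properties of a POVM directly for the tuple $P_F=(|v_1\rangle\langle v_1|,...,|v_m\rangle\langle v_m|)$: that every effect operator is positive semidefinite, and that the effect operators sum to the identity $\id_n$. The positivity is immediate and carries no content — for any $y\in\mathbb{C}^n$ one has $\langle y|\,(|v_i\rangle\langle v_i|)\,|y\rangle=|\langle v_i,y\rangle|^2\geq0$ — so the whole argument reduces to deriving the completeness relation $\sum_{i=1}^m|v_i\rangle\langle v_i|=\id_n$ from the Parseval hypothesis $a=b=1$.

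First I would observe that each summand in the frame inequality \eqref{eqframe} is exactly a diagonal matrix element of the associated rank one operator, namely $|\langle x,v_i\rangle|^2=\langle x|\,(|v_i\rangle\langle v_i|)\,|x\rangle$. Summing over $i$ and inserting $a=b=1$ then rewrites \eqref{eqframe} as the scalar identity
\begin{align*}
\langle x|\Big(\sum_{i=1}^m|v_i\rangle\langle v_i|\Big)|x\rangle=\|x\|_2^2=\langle x|\,\id_n\,|x\rangle\qquad\text{for all}\ x\in\mathbb{C}^n.
\end{align*}
Setting $A:=\sum_{i=1}^m|v_i\rangle\langle v_i|-\id_n$, this says precisely that $\langle x|A|x\rangle=0$ for every $x\in\mathbb{C}^n$.

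The final step is to conclude $A=0$ from the vanishing of its quadratic form. Over a complex Hilbert space this holds by polarization: evaluating $\langle\,\cdot\,|A|\,\cdot\,\rangle$ on the combinations $x+y$ and $x+iy$ recovers the full sesquilinear form $\langle x|A|y\rangle$, which must therefore vanish identically, forcing $A=0$; alternatively, since $A$ is hermitian, a vanishing quadratic form makes every eigenvalue zero. Either way we obtain $\sum_{i=1}^m|v_i\rangle\langle v_i|=\id_n$, and hence $P_F$ is a POVM. I expect no genuine obstacle here — the only mildly delicate point is this last polarization step, and everything else is a direct dictionary between the frame inequality and matrix elements of the rank one operators.
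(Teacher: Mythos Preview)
Your proof is correct and essentially matches the paper's argument: both reduce the Parseval condition to $\langle x|A|x\rangle=0$ for all $x$ with $A=\sum_i|v_i\rangle\langle v_i|-\id_n$, and then conclude $A=0$. The paper uses precisely your ``alternative'' hermitian eigenvalue argument (an eigenvector with eigenvalue $\lambda\neq1$ would violate the identity), so your polarization step is a mild variant but not a genuinely different route.
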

\begin{proof}
Let $F=\{v_1,\hdots,v_m\}$. Since $F$ is a tight frame, we obtain the following equality from inequality \eqref{eqframe}:
\begin{align*}
\sum_{i=1}^m|\langle v_i,x\rangle|^2=\|x\|_2^2.
\end{align*}
This can be rewritten as
\begin{align*}
\sum_{i=1}^m|\langle v_i,x\rangle|^2=\text{tr}(x x^\dagger\sum_{i=1}^mv_i v_i^\dagger)=\|x\|^2.
\end{align*}
But since this holds for all $x\in\mathbb{C}^n$ we conclude that $\sum_{i=1}^mv_i v_i^\dagger=\id_{\mathbb{C}^n}$: Assume $\sum_{i=1}^m v_i v_i^\dagger\neq \id_{\mathbb{C}^n}$. Since $\sum_{i=1}^mv_i v_i^\dagger$ is hermitian there has to be an eigenvector $w$ of $\sum_{i=1}^mv_i v_i^\dagger$ with eigenvalue $\lambda\neq 1$. But then $ w^\dagger\sum_{i=1}^m v_i v_i^\dagger w=\lambda \|w\|_2^2\neq \|w\|_2^2$, a contradiction.
\end{proof}
\begin{remark}
Note that the correspondence is given by the map $\phi$ defined in equation \eqref{phi} where the frame vectors are given by the rows of the isometry.
\end{remark}
Let $P$ be a POVM. In pure state tomography, not $h_{P}|_{\mathcal{P}^n_1}$ is required to be injective, but $h_{P}|_{\mathcal{S}_1^n}$ where $\mathcal{S}_1^n:=\{\varrho\in\mathcal{S}(\mathbb{C}^n):\varrho^2=\varrho\}$ is the set of pure states. However, by the definition of a POVM, $\id_n\in P$ and this implies that if $h_{P}|_{\mathcal{S}_1^n}$
is injective, also $h_{P}|_{\mathcal{P}_1^n}$ is injective. From this point of view, pure state quantum tomography with rank one POVMs is equivalent to phase retrieval with tight frames.

\section{The Basic Idea}\label{method}

Let us begin by explaining the basic idea of the method we utilize to find one-to-one measurement schemes which originates from the approach taken in \cite{balan2006signal} to find frames for the phase retrieval problem.

The method essentially relies on the following observation: A measurement scheme $P:=((Q^1_1,\hdots,Q^1_m),\hdots,(Q^k_1,\hdots,Q^k_m))$ is $\mathcal{R}$-complete with respect to a subset $\mathcal{R}\subseteq \mathcal{S}(\h)$ if and only if the equations 
\begin{align}\label{constf}
\text{tr}(Q^j_i X)=0,\ \ \,\ i\in\{1,\hdots,m-1\},j\in\{1,\hdots,k\}
\end{align}
have no solution for $X\in\Delta(\mathcal{R})-\{0\}$.

For a given subset $\mathcal{R}\subseteq\mathcal{S}(\h)$, we want to characterize non-injective measurement schemes via the equations \eqref{constf} and use the dimension theory of semi-algebraic sets to show that these have measure zero. Therefore, we consider measurement schemes that are constrained by real algebraic equalities or inequalities. In the following, the set of measurement schemes is a semi-algebraic set $\mathcal{M}$ such that for all $M\in\mathcal{M}$ we have $\dim P=m, \forall P\in M$ and $|M|=k$ where $m,k\in\mathbb{N}$ are some fixed numbers. For example, if $k=1$, this could be the restriction to the set of $m$-dimensional rank one POVMs $\mathcal{M}_1^m(\h)$. Furthermore, in order to ensure that the equations \eqref{constf} in fact become real algebraic equations, we have to replace $\Delta(\mathcal{R})-\{0\}$ by a suitable semi-algebraic set. We do this by constructing a semi-algebraic set $\mathcal{D}\subseteq H(\h)$\footnote{Here we identify $H(\h)$ with $(\dim\h)^2$-dimensional real affine space.} with the following property: If there is a measurement scheme $M$ and an $X\in\Delta(\mathcal{R})-\{0\}$ with
\begin{align}\label{const}
h_M(X)=0
\end{align}
then there exists $X^\prime\in\mathcal{D}$ with
\begin{align}\label{constff}
h_M(X^\prime)=0.
\end{align}
If a semi-algebraic set $\mathcal{D}\subseteq H(\h)$ with $0\notin\mathcal{D}$ has this property, we say that $\mathcal{D}$ represents $\Delta(\mathcal{R})-\{0\}$. 

The solution set of the equations \eqref{constff} characterizes the non-injective measurement schemes: Let $\tilde{\mathcal{M}}$ be the real semi-algebraic set obtained from $\mathcal{M}\times\mathcal{D}$ by imposing the equations \eqref{constff}. By construction of $\mathcal{D}$, the non-injective measurement schemes are contained in the projection of $\tilde{\mathcal{M}}\subseteq\mathcal{M}\times\mathcal{D}$ on the first factor with the canonical projection $\pi_1:\mathcal{M}\times\mathcal{D}\to\mathcal{M}$. But if $\dim \tilde{\mathcal{M}}<\dim\mathcal{M}$, we also have $\dim\pi_1( \tilde{\mathcal{M}})<\dim\mathcal{M}$ \footnote{$\pi_1$ maps semi-algebraic sets to semi-algebraic sets and does not increase the dimension. See Theorem 2.2.1 and Proposition 2.8.6 of \cite{bochnak1998real}.} and thus the non-injective measurement schemes have measure zero in $\mathcal{M}$. Here we used the well-know fact that, for a suitably chosen measure, the measure of a semi-algebraic subset $S$ of a semi-algebraic set $A$ has measure zero in $A$ if $\dim A>\dim S$.  For more details on the measure see Appendix \ref{appendixD}.

This approach is most efficient if the equations \eqref{constff} are transversal to $\mathcal{M}\times\mathcal{D}$. In this case $\dim \tilde{\mathcal{M}}<\dim\mathcal{M}$ is equivalent to $k(m-1)>\dim \mathcal{D}$ and thus the quality of our result is determined by how low-dimensional we can choose the semi-algebraic set $\mathcal{D}$.

\section{Low-Rank Matrix Recovery with Frames}

To illustrate how this procedure works, let us consider the problem of low-rank matrix recovery with frames. We show that any two hermitian matrices of rank at most $r$ can be discriminated from a generic frame with $m\geq 4r(n-r)$ frame vectors. The proof we give is inspired by the proof of Theorem 3.1 in \cite{balan2006signal}. Let $r\in\{1,\dots,[n/2]\}$\footnote{Here $[x]:=$largest integer $i$ such that $i\leq x$.}.
\begin{theorem}[Low-Rank Matrix Recovery with Frames]\label{frame}
Let $m\geq 4r(n-r)$. For almost all frames $F=\{v_1,..,v_{m}\}$ the map $h_F|_{\mathcal{P}_r^n}$ (see Equation \eqref{map}) is injective.
\end{theorem}
\begin{proof}
Let $F=(v_1,\hdots,v_m),\ v_i\in\mathbb{C}^n,$ and consider the equations
\begin{align}\label{eqap}
v_i^\dagger X v_i=0,\ i\in\{1,\hdots,m\},
\end{align}
in $v_i\in\mathbb{C}^n$, $X\in\Delta(\mathcal{P}_r^n)-\{0\}$. As explained above, these equations determine the subset $N$ of $F\in\mathbb{C}^{nm}\simeq \mathbb{R}^{2nm}$ for which $h_{F}|_{\mathcal{P}_r^n}$ fails to be injective. 

Note that $\Delta(\mathcal{P}_r^n)-\{0\}=\mathcal{P}_{2r}^n-\{0\}$. Consider the algebraic set $\mathcal{D}:=\{X\in\mathcal{P}_{2r}^n:\text{tr}(X^2)=1\}$ and note that  we have $\dim{\mathcal{D}}=4r(n-r)-1$ by Corollary \ref{corrank}. Furthermore, $\mathcal{D}$ represents $\Delta(\mathcal{P}_r^n)-\{0\}$: 
Clearly $0\notin\mathcal{D}$. Next, consider a measurement scheme $M$ and $X\in\mathcal{P}_{2r}^n-\{0\}$ such that $h_M(X)=0$. But then there is $X^\prime:=\frac{X}{\|X\|_2}\in\mathcal{D}$ such that $h_M(X^\prime)=\frac{1}{\|X\|_2}h_M(X)=0$.

Under the identification $\mathbb{C}^{nm}\simeq \mathbb{R}^{2nm}$ the equations \eqref{eqap} are $m$ equations on the real algebraic set $\mathbb{C}^{nm}\times\mathcal{D}$ and next we prove that imposing these equations decreases the dimension of $\mathbb{C}^{nm}\times\mathcal{D}$ by at least $m$: Note that it suffices to prove that imposing the equation \eqref{eqap} on $\mathbb{C}^{nm}$, for fixed $X\in\mathcal{D}$, decreases the dimension by at least $m$. But for fixed $X\in\mathcal{D}$, the $i$-th equation of \eqref{eqap} just involves the variables of the $i$-th factor in $(\mathbb{C}^{n})^m$. Thus it suffices to prove that for given $X\in\mathcal{D}$ imposing the equation
\begin{align}\label{eqnframe}
p(v):= v^\dagger X v=0,\ v\in\mathbb{C}^n,
\end{align}
on $\mathbb{C}^n\simeq\R^{2n}$ decreases the dimension by at least one. But for given $X\in\mathcal{D}$ there is $v\in\mathbb{C}^n$ such that $p(v)= v^\dagger X v=\text{tr}(X v v^\dagger)\neq 0$ because $H(\mathbb{C}^n)$ has a basis of rank one operators and $X\neq 0$. Thus, \eqref{eqnframe} is a non-trivial algebraic equation on the irreducible algebraic set $\mathbb{C}^n\simeq\R^{2n}$. But this immediately implies that \eqref{eqnframe} does decrease the dimension \footnote{Every proper algebraic subset of the irreducible algebraic set $\R^{2m}$ has dimension less than $2m$.}.

Let $\mathcal{M}$ be the algebraic subset of $\mathbb{C}^{nm}\times\mathcal{D}$ obtained by imposing the equations \eqref{eqap} and denote by $\pi_1: \mathbb{C}^{nm}\times\mathcal{D}\to\mathbb{C}^{nm}$ the canonical projection on the first factor. For $m>\dim \mathcal{D}=4r(n-r)-1$, we find $\dim \pi_1(\mathcal{M})<\dim \mathbb{C}^{nm}=2nm$ since imposing the equations \eqref{eqap} on $\mathbb{C}^{nm}$ decreases the dimension by at least $m$. Thus, we conclude that  $\pi_1(\mathcal{M})$  has Lebesgue measure zero \footnote{The Lebesgue measure on $\R^n$ is a rescaling of the $n$-dimensional Hausdorff-measure.} in $\mathbb{C}^{nm}$. Hence, the subset of $F\in\mathbb{C}^{nm}$ for which $h_{F}|_{\mathcal{P}_r^n}$ is injective has full Lebesgue measure. Note, that the subset of frames in $\mathbb{C}^{nm}$ has full Lebesgue measure for $m\geq n$. Choosing the measure on the set of frames to be the restriction of the Lebesgue measure, also the subset of frames for which $M_F$ is injective has full measure.
\end{proof}

For $r=1$, this is the phase retrieval problem and in this case Theorem \ref{frame} reproduces the main result of \cite{conca2014algebraic}.
\begin{corollary}
Let $m\ge 4n-4$. For almost all frames $F=\{v_1,..,v_{m}\}$ the map $M_F$ (see Equation \eqref{measframe}) is injective.
\end{corollary}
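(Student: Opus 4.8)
The final statement to prove is the Corollary:

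\begin{quote}
Let $m\ge 4n-4$. For almost all frames $F=\{v_1,\dots,v_{m}\}$ the map $M_F$ is injective.
\end{quote}

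The plan is to deduce this directly from Proposition \ref{frame} by specializing to the rank-one case $r=1$. First I would observe that $M_F$ is the phase-retrieval map sending $[x]\mapsto(|\langle v_1,x\rangle|^2,\dots,|\langle v_m,x\rangle|^2)$, and recall the identity $h_{P_F}(|x\rangle\langle x|)=M_F(x)$ established in the Preliminaries. This identity shows that $M_F$ is injective (on rays) if and only if $h_{P_F}|_{\mathcal{P}_1^n}$ is injective, since the rank-one Hermitian operators $|x\rangle\langle x|$ are exactly the images under $x\mapsto|x\rangle\langle x|$ and two signals $x,y$ differ by a phase precisely when $|x\rangle\langle x|=|y\rangle\langle y|$.

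The key step is then to set $r=1$ in Proposition \ref{frame}. With $r=1$ the hypothesis $m\ge 4r(n-r)$ becomes $m\ge 4(n-1)=4n-4$, which is exactly the bound in the Corollary. The Proposition asserts that under this hypothesis, for almost all frames $F$ the map $h_{P_F}|_{\mathcal{P}_1^n}$ is injective. Combining this with the equivalence noted above, I conclude that for almost all frames $M_F$ is injective, with respect to the restriction of Lebesgue measure on the set of frames in $\mathbb{C}^{nm}$ (which itself has full Lebesgue measure for $m\ge n$, as noted in the proof of the Proposition).

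There is essentially no obstacle here, since the Corollary is a clean specialization. The only point requiring a moment of care is confirming that injectivity of $h_{P_F}$ restricted to $\mathcal{P}_1^n$ is genuinely equivalent to injectivity of $M_F$ rather than merely implied by it; this is handled by the observation in the Preliminaries that $h_{P_F}(|x\rangle\langle x|)=M_F(x)$, which is an iff statement. One might also remark, as the text does, that for $r=1$ this recovers the main result of \cite{conca2014algebraic} on generic phase retrieval, so the Corollary requires no independent argument beyond invoking the general Proposition.
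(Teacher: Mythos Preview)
Your proposal is correct and takes essentially the same approach as the paper: both reduce the Corollary to the $r=1$ case of Proposition~\ref{frame}. The paper rewrites the non-injectivity condition for $M_F$ as $\langle v_i|X|v_i\rangle=0$ with $X\in\Delta(\mathcal{P}_1^n)-\{0\}$ and then appeals to the proof of Proposition~\ref{frame}, whereas you invoke the equivalence $M_F$ injective $\Leftrightarrow h_{P_F}|_{\mathcal{P}_1^n}$ injective from the Preliminaries and apply the Proposition as a black box; these amount to the same thing. One minor remark: your parenthetical that ``the rank-one Hermitian operators $|x\rangle\langle x|$ are exactly the images under $x\mapsto|x\rangle\langle x|$'' is slightly imprecise, since $\mathcal{P}_1^n$ also contains negatives $-|x\rangle\langle x|$; but this does not affect the argument, as the equivalence you cite is stated and justified in the Preliminaries.
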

\begin{proof}
Let $F=\{v_1,\hdots,v_m\},\ v_i\in\mathbb{C}^n,$ and consider the equations
\begin{align*}
|\langle v_i,x\rangle|^2-|\langle v_i,y\rangle|^2= v_i^\dagger(x x^\dagger-y y^\dagger)v_i=0,\ i\in\{1,\hdots,m\},
\end{align*}
in $x,y,v_i\in\mathbb{C}^n$ where $x x^\dagger-y y^\dagger\neq 0$. These equations determine the subset $N$ of $F\in\mathbb{C}^{nm}\simeq \mathbb{R}^{2nm}$ for which $M_F$ fails to be injective. It is easily seen that the equations
\begin{align}\label{opo}
 v_i^\dagger Xv_i=0,\ i\in\{1,\hdots,m\},
\end{align}
where $X\in\Delta(\mathcal{P}_1^n)-\{0\}$, determine the same subset $N$. But the equations \eqref{opo} are precisely the equations \eqref{eqap} for $r=1$. Thus, the proof can be concluded by going along the lines of the proof of Theorem \ref{frame}.
\end{proof}

{ A similar result holds true for tight frames.
\begin{theorem}[Low-Rank Matrix Recovery with Tight Frames]\label{tightframe}
If $k(m-1)\ge 4r(n-r)-1$, then for almost all collections of tight frames $F_1,\hdots,F_k$, with $|F_i|=m$ for all $i\in\{1,\hdots,k\}$, the map $(h_{F_1},\hdots,h_{F_k})|_{\mathcal{P}_r(\mathbb{C}^n)}$ is injective.
\end{theorem}
The proof of this Theorem relies on Lemma \ref{LEMvN} which is our main technical result. Therefore we relegate its proof to Section \ref{proof}.
}

\section{Stability}
The measurement schemes obtained by the method presented in Section \ref{method} typically come with a stability property. Let
\begin{align*}
\mathcal{M}(n_1,\hdots,n_k):=\{M:=(P^1,\hdots,P^k):\ P^i\text{ POVM with }\dim P^i=n_i\}.
\end{align*}
In this section we denote $\mathcal{M}(n_1,\hdots,n_k)$ by $\mathcal{M}$. We equip $\mathcal{M}$ with the topology induced by the metric
\begin{align*}
d(M,M^\prime):=\|h_M-h_{M^\prime}\|=\sup_{X\in H(\mathbb{C}^n)}\frac{\|h_M(X)-h_{M^\prime}(X)\|_2}{\|X\|_2}
\end{align*}
where $M,M^\prime\in\mathcal{M}$.

\begin{definition}\label{defstab}
Let $\mathcal{R}\subseteq\mathcal{S}(\mathbb{C}^n)$ be a subset. An $\mathcal{R}$-complete measurement scheme $M\in \mathcal{M}$ is stably $\mathcal{R}$-complete if there exists a neighbourhood $\mathcal{N}$ of $M$ such that every measurement scheme $M^\prime\in \mathcal{N}$ is $\mathcal{R}$-complete.
\end{definition}

Let $\mathcal{R}\subseteq\mathcal{S}(\mathbb{C}^n)$ be a subset and let $\mathcal{D}\subseteq\mathcal{S}(\mathbb{C}^n)$ be a semi-algebraic set that represents $\Delta(\mathcal{R})-\{0\}$. Consider the semi-algebraic map
\begin{align}\label{psi}
\begin{split}
\psi: \mathcal{D}&\to H(\mathbb{C}^n)\\
    X&\mapsto \frac{X}{\|X\|_2}. 
\end{split}
\end{align}
By Proposition 2.2.7 and Theorem 2.8.8 of \cite{bochnak1998real}, $\tilde{\mathcal{D}}:=\psi(\mathcal{D})$ is semi-algebraic with $\dim\tilde{\mathcal{D}}\leq\dim\mathcal{D}$. Furthermore $\tilde{\mathcal{D}}$ clearly represents  $\Delta(\mathcal{R})-\{0\}$.

\begin{lemma}\label{stab}
If $\tilde{\mathcal{D}}$ is closed, every $\mathcal{R}$-complete measurement scheme $M\in\mathcal{M}$ is stably $\mathcal{R}$-complete.
\end{lemma}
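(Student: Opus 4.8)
The plan is to reduce stable $\mathcal{R}$-completeness of $M$ to a uniform non-vanishing statement for $h_M$ on the compact set $\tilde{\mathcal{D}}$, and then to perturb. First I would record the basic reformulation: by linearity of $h_M$, a scheme $M'\in\mathcal{M}$ is $\mathcal{R}$-complete if and only if $h_{M'}(X)\neq 0$ for every $X\in\Delta(\mathcal{R})-\{0\}$. Applying the contrapositive of the representation property to $M'$, it follows that $M'$ is $\mathcal{R}$-complete whenever $h_{M'}$ has no zero on $\tilde{\mathcal{D}}$. Conversely, since $\tilde{\mathcal{D}}=\psi(\mathcal{D})$ with $\mathcal{D}\subseteq\Delta(\mathcal{R})-\{0\}$ and the vanishing of a linear map is invariant under positive rescaling, any zero $X_0\in\tilde{\mathcal{D}}$ of $h_M$ comes from a preimage $Z\in\mathcal{D}\subseteq\Delta(\mathcal{R})-\{0\}$ with $h_M(Z)=\|Z\|_2\,h_M(X_0)=0$. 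Hence $\mathcal{R}$-completeness of $M$ is in fact \emph{equivalent} to $h_M$ having no zero on $\tilde{\mathcal{D}}$.

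Next I would use the closedness hypothesis to turn this into a quantitative gap. Every element of $\tilde{\mathcal{D}}$ has unit Hilbert--Schmidt norm, so $\tilde{\mathcal{D}}$ is a bounded subset of the finite-dimensional space $H(\mathbb{C}^n)$; being closed by assumption, it is compact. The map $X\mapsto\|h_M(X)\|_2$ is continuous, hence attains its minimum $\delta:=\min_{X\in\tilde{\mathcal{D}}}\|h_M(X)\|_2$ on $\tilde{\mathcal{D}}$. Since $M$ is $\mathcal{R}$-complete, the first paragraph shows $h_M$ has no zero on $\tilde{\mathcal{D}}$, and therefore $\delta>0$.

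Finally I would perturb within the ball $\mathcal{N}:=\{M'\in\mathcal{M}:d(M,M')<\delta\}$. For $M'\in\mathcal{N}$ and any $X\in\tilde{\mathcal{D}}$ we have $\|X\|_2=1$, so the definition of the metric gives $\|h_M(X)-h_{M'}(X)\|_2\le d(M,M')\,\|X\|_2=d(M,M')$, whence
\begin{align*}
\|h_{M'}(X)\|_2\ge\|h_M(X)\|_2-d(M,M')\ge\delta-d(M,M')>0.
\end{align*}
Thus $h_{M'}$ has no zero on $\tilde{\mathcal{D}}$, and by the representation property (contrapositive) $M'$ is $\mathcal{R}$-complete; as $\mathcal{N}$ is an open neighbourhood of $M$, this proves that $M$ is stably $\mathcal{R}$-complete.

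The step I expect to be the crux is establishing $\delta>0$, i.e. that $\mathcal{R}$-completeness forces $h_M$ to stay away from zero on the \emph{whole} of $\tilde{\mathcal{D}}$. Here both the closedness (hence compactness) of $\tilde{\mathcal{D}}$ and the fact that its points are normalizations of genuine difference operators in $\Delta(\mathcal{R})-\{0\}$ are essential: without compactness the infimum of $\|h_M(\cdot)\|_2$ could vanish even though $h_M$ never does, and the perturbation bound would collapse. This is exactly why passing from the possibly non-closed set $\Delta(\mathcal{R})-\{0\}$ to the closed representative $\tilde{\mathcal{D}}$ is the key structural move.
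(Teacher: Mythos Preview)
Your proof is correct and takes essentially the same route as the paper's: compactness of $\tilde{\mathcal{D}}\subseteq SH(\mathbb{C}^n)$ yields a strictly positive minimum $\delta=\min_{X\in\tilde{\mathcal{D}}}\|h_M(X)\|_2$, and a triangle-inequality perturbation shows every scheme in an open ball around $M$ (radius $\delta$ for you, $\kappa/2$ in the paper) has no zero on $\tilde{\mathcal{D}}$ and is therefore $\mathcal{R}$-complete by the representation property. One small caveat: to justify $\delta>0$ you invoke $\mathcal{D}\subseteq\Delta(\mathcal{R})-\{0\}$, which is not part of the paper's formal definition of a representing set---the paper's own proof simply asserts ``$\kappa>0$ since $M$ is $\mathcal{R}$-complete'' at the same point, so you are being more explicit about, but not less rigorous than, the original.
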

\begin{proof}
Note that $\tilde{\mathcal{D}}\subseteq SH(\mathbb{C}^n)$. $SH(\mathbb{C}^n)$ is compact and thus $\tilde{\mathcal{D}}$ is compact being a closed subset of a compact set. By the continuity of the induced map $h_M$ and compactness of $\tilde{\mathcal{D}}$, $\kappa:=\min_{X\in\tilde{\mathcal{D}}}\|h_M(X)\|_2$ exists and $\kappa>0$ since $M$ is $\mathcal{R}$-complete. Now let $B(M,\kappa/2):=\{M^\prime\in \mathcal{M}:\sup_{X\in SH(\mathbb{C}^n)}\|h_M(X)-h_M^\prime(X)\|_2<\kappa/2\}$ and note that $B(M,\kappa/2)$ is open. But then
\begin{align*}
\min_{X\in\tilde{\mathcal{D}}}\|h_{M^\prime}(X)\|&\geq \min_{X\in\tilde{\mathcal{D}}}\|h_M(X)\||-\min_{X\in\tilde{\mathcal{D}}}\|h_{M^\prime}(X)-h_M(X)\|\\
&\geq \min_{X\in\tilde{\mathcal{D}}}\|h_M(X)\||-\max_{X\in\tilde{\mathcal{D}}}\|h_{M^\prime}(X)-h_M(X)\|\\
&\geq \kappa-\max_{X\in SH(\mathbb{C}^n)}\|h_{M^\prime}(X)-h_M(X)\|\\
&\geq \kappa-\kappa/2=\kappa/2.
\end{align*}
Thus all measurement schemes $M^\prime\in B(M,\kappa/2)$ are $\mathcal{R}$-complete. 
\end{proof}
\begin{remark}
Note that $\tilde{\mathcal{D}}$ need not be closed for this lemma to apply: In the situations presented in the following the conclusions solely depend on the dimension of $\tilde{\mathcal{D}}$. By Proposition 2.8.2 of \cite{bochnak1998real} the dimension of $\tilde{\mathcal{D}}$ coincides with the dimension of its closure $\overline{\tilde{\mathcal{D}}}$ in the norm topology on $H(\mathbb{C}^n)$. Furthermore, by Proposition 2.2.2 of \cite{bochnak1998real}, the closure of a semi-algebraic set is semi-algebraic. Thus $\overline{\tilde{\mathcal{D}}}$ represents $\Delta(\mathcal{R})-\{0\}$ and $\dim\overline{\tilde{\mathcal{D}}}\le\dim\mathcal{D}$.
\end{remark}

\section{ Quantum Tomography with von Neumann Measurements}

\subsection*{Universality of Rank One POVMs}
The following lemma is the main technical result of this article. It asserts that the equations \eqref{constff} are independent when restricting to rank one POVMs. More precisely let $\h=\mathbb{C}^n$ and denote by $\{e_i\}_{i\in\{1,\hdots,n\}}$ the standard basis of $\mathbb{C}^n$. 

For a fixed non-zero $X\in H(\mathbb{C}^n)$, consider the equations
\begin{align}\label{eqnblaaaa}
\begin{split}
p^j_i(M_1,\dots,M_k)&:=\text{tr}(M_i^\dagger e_j e_j^\dagger M_i X)=e_j^\dagger M_iXM_i^\dagger  e_j=0,\\
&i\in\{1,\hdots,k\},\ j\in\{1,\hdots,m\},\\
q_i^{jl}(M_1,\dots,M_k)&:=e_j^\dagger M_i^\dagger M_i e_l-\delta_{jl}=0,\\
&i\in\{1,\hdots,k\},\ j,l\in\{1,\hdots,n\},
\end{split}
\end{align}
in $(M_1,\dots,M_k)\in \Pi_{i=1}^kM(m,n,\mathbb{C})$. Under the canonical identification $M(m,n,\mathbb{C})\simeq \R^{2nm}$, these can be considered as real algebraic equations in the $2knm$ variables $(M_1,\hdots,M_k)$.
\begin{lemma}\label{LEMvN}
Let $X\in H(\mathbb{C}^n)$ with $X\neq 0$. Imposing the equations \eqref{eqnblaaaa} on $\Pi_{i=1}^kM(m,n,\mathbb{C})$ decreases the dimension by at least $kn^2+k(m-1)$.
\end{lemma}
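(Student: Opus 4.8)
The plan is to separate the roles of the two families of equations and to locate all the content in the second one. The first thing I would observe is that whenever $\text{tr}(X)\neq 0$ the system has no solution at all: on the locus where the $q_i^{jl}$ hold each $M_i$ is an isometry, so $\sum_{j=1}^m\langle j|M_iXM_i^\dagger|j\rangle=\text{tr}(M_iXM_i^\dagger)=\text{tr}(XM_i^\dagger M_i)=\text{tr}(X)$, which is incompatible with all the $p$-equations vanishing. In that case the zero set is empty and there is nothing to prove, so I may assume that $X$ is a nonzero \emph{traceless} hermitian matrix; in particular it has at least one positive and one negative eigenvalue.

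Next I would exploit that the equations for distinct $i$ involve disjoint blocks of variables: the zero set is the $k$-fold product of the single-index zero set $Z_1\subseteq M(m,n,\mathbb{C})$, and dimension is additive over products (Theorem 2.8.8 of \cite{bochnak1998real}). Hence it suffices to show $Z_1$ has codimension at least $n^2+(m-1)$; summing over the $k$ factors then gives a decrease of at least $k(n^2+m-1)\ge n^2+k(m-1)$, which is more than claimed. For one factor, the $q$-equations alone cut $M(m,n,\mathbb{C})$ down to the Stiefel variety $U(m,n)$, a smooth irreducible set of codimension exactly $n^2$ on which the $q$-differentials are everywhere independent. It then remains to prove that imposing the $p$-equations $g_j(M):=\langle j|MXM^\dagger|j\rangle=0$ on $U(m,n)$ lowers the dimension by a further $m-1$.

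The heart of the matter is a transversality computation. Since $\sum_j g_j=\text{tr}(X)=0$ on $U(m,n)$, at most $m-1$ of the $g_j$ can be independent, and I must show that exactly $m-1$ are. The map $M\mapsto MXM^\dagger$ sends $U(m,n)$ onto the conjugation orbit $\mathcal{O}$ of $X\oplus 0_{m-n}$ as a surjective submersion, and $(g_1,\dots,g_m)$ factors through the diagonal map $\delta\colon\mathcal{O}\to\{y:\sum_j y_j=0\}$. Because $X$ is traceless and nonzero, $0$ is interior to the Schur--Horn polytope, so by majorization there is $H_0\in\mathcal{O}$ with zero diagonal, and I would choose $H_0$ to have connected off-diagonal support. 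Differentiating along the orbit directions $[\xi,H_0]$ with $\xi$ antihermitian, a short computation yields $\text{diag}[E_{ab}-E_{ba},\,H_0]=2\,\mathrm{Re}(H_0)_{ab}(e_a-e_b)$ and $\text{diag}[i(E_{ab}+E_{ba}),\,H_0]=2\,\mathrm{Im}(H_0)_{ab}(e_a-e_b)$, so the image of $d\delta_{H_0}$ contains $e_a-e_b$ for every edge $(a,b)$ of the support graph; connectedness makes these span the whole hyperplane. Thus $\delta$ is a submersion at $H_0$, and pulling back through the submersion $M\mapsto MXM^\dagger$ produces a point of $Z_1$ where the full Jacobian of \eqref{eqnblaaaa} has rank $n^2+(m-1)$.

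The hard part will be the globalisation: a single full-rank point only bounds the dimension of the component of $Z_1$ through it, whereas I need \emph{every} top-dimensional component to have codimension at least $n^2+(m-1)$. The clean route is a generic-smoothness argument — the locus where $d\delta$ drops rank is the proper subvariety $\Sigma\subsetneq\mathcal{O}$ of matrices fixed by a positive-dimensional subtorus (the block-structured elements), and one must verify that $\Sigma\cap\delta^{-1}(0)$ does not spawn a component of $\delta^{-1}(0)$ of dimension exceeding $\dim\mathcal{O}-(m-1)$. I expect this to be the delicate step, to be settled either by stratifying $\mathcal{O}$ by torus-stabiliser type (each critical stratum being itself low-dimensional) or, more in the spirit of Proposition \ref{frame}, by arguing directly on $U(m,n)$ that the $m-1$ chosen equations meet each top-dimensional component of their common zero locus non-trivially, so that $\dim Z_1=\dim U(m,n)-(m-1)$ follows from the pointwise rank computation above.
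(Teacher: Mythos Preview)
Your reduction to $k=1$ and the observation that the $q$-equations cut out $U(m,n)$ with exact codimension $n^2$ are correct and match the paper. The paper also reduces to one factor and then argues that the $p$-equations drop the dimension of $U(m,n)$ by a further $m-1$.

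Where you diverge is in the tactic for that last step. Your Schur--Horn argument is aimed at producing \emph{one} point of $Z_1$ where the Jacobian has rank $n^2+(m-1)$. Two remarks on this. First, the auxiliary claim that one can always choose $H_0\in\mathcal{O}$ with zero diagonal \emph{and} connected off-diagonal support is not justified; Schur--Horn gives zero diagonal, but connectivity of the support graph is an extra condition that needs an argument (for rank-deficient $H_0$ with many zero eigenvalues it is not automatic). Second, and more importantly, you yourself identify that a single full-rank point does not control the other irreducible components of $Z_1$, so this part of the argument ends up being a warm-up that the globalisation step would subsume anyway.

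The globalisation you flag as ``the hard part'' is indeed the entire content of the paper's proof, and the fix you propose---stratify by torus-stabiliser type---is exactly what the paper carries out. Concretely, for each partition $P=\{M_1,\dots,M_{l+1}\}$ of $\{1,\dots,m\}$ the paper defines the locally closed piece
\[
W_P=\bigl\{U\in V_{I\cup J}:\ [D_{M},UXU^\dagger]=0\ \text{for all }M\in P,\ \text{but not for any finer block structure}\bigr\},
\]
so that $V_{I\cup J}=\bigcup_P W_P$. On $W_P$ the defining ideal is enlarged by the commutator equations $[D_{M_j},UXU^\dagger]=0$, and the paper computes the full Jacobian of this enlarged system at \emph{every} $U\in W_P$: the original $p,q$-equations contribute rank at least $n^2+(m-l-1)$ (because within each block $M_j$ the only diagonal $D_\alpha$ commuting with $UXU^\dagger$ is a scalar), and the $l$ extra commutator constraints contribute $l$ more independent differentials, for a total of at least $n^2+(m-1)$. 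Applying this at a nonsingular point of each nonempty $W_P$ bounds $\dim W_P\le 2mn-n^2-(m-1)$ stratum by stratum.

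In short: your skeleton is right and your proposed resolution is the paper's method, but the proposal stops precisely where the work begins. The Schur--Horn detour does not buy anything once the stratification is in place, so if you want a complete proof you should drop it and carry out the partition-by-partition rank computation directly.
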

\begin{remark}
Regarding $X\in\mathcal{D}\subseteq H(\mathbb{C}^n)$ as an variable, the equations \eqref{eqnblaaaa} can be considered as equations on $\prod_{i=1}^kM(m,n,\mathbb{C})\times\mathcal{D}$. Then, Lemma \ref{LEMvN} implies that imposing the equations \eqref{eqnblaaaa} on $\prod_{i=1}^kM(m,n,\mathbb{C})\times\mathcal{D}$ decreases the dimension by at least $n^2+k(m-1)$ for every semi-algebraic set $\mathcal{D}\subseteq H(\mathbb{C}^n)$ with $0\notin\mathcal{D}$.
\end{remark}
Since the proof of this result is rather technical we relegate it to Section \ref{proof}. Lemma \ref{LEMvN} allows us to prove the main theorem of this section.
\begin{theorem}[Universality]\label{THMvN}
For $\mathcal{R}\subseteq\mathcal{S}(\mathbb{C}^n)$ a subset, let $\mathcal{D}$ be a semi-algebraic set that represents $\Delta(R)-\{0\}$. If $k(m-1)>\dim \mathcal{D}$, almost all measurement schemes $M\in\mathcal{M}_{1,k}^m(\mathbb{C}^n)$ are stably $\mathcal{R}$-complete.
\end{theorem}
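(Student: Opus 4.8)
The plan is to reduce the statement to the dimension count of Section \ref{method}, feeding in Lemma \ref{LEMvN} as the transversality input, and then to promote $\mathcal R$-completeness to \emph{stable} $\mathcal R$-completeness through Lemma \ref{stab}. First I would fix the parametrization: via the correspondence $\phi$ of \eqref{phi}, the $k$-fold map $\Phi:=\phi^{\times k}\colon\prod_{i=1}^kU(m,n)\to\mathcal M_{1,k}^m(\mathbb C^n)$ is a surjective semi-algebraic map, and a scheme $M=\Phi(U_1,\dots,U_k)$ has effect operators $Q^i_j=U_i^\dagger|j\rangle\langle j|U_i$. Since the measure on $\mathcal M_{1,k}^m(\mathbb C^n)$ underlying ``almost all'' (Appendix \ref{appendixD}) is compatible with $\Phi$, it suffices to show that the tuples $(U_1,\dots,U_k)\in\prod_iU(m,n)$ yielding a non-complete scheme form a semi-algebraic set of dimension strictly below $\dim\prod_iU(m,n)=k(2mn-n^2)$.

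Before counting, I would arrange the stability upgrade. Replace $\mathcal D$ by $\overline{\tilde{\mathcal D}}$, the norm-closure of $\tilde{\mathcal D}=\psi(\mathcal D)\subseteq SH(\mathbb C^n)$ from \eqref{psi}. By the Remark following Lemma \ref{stab} this set is semi-algebraic, closed, still represents $\Delta(\mathcal R)-\{0\}$, and obeys $\dim\overline{\tilde{\mathcal D}}\le\dim\mathcal D$, so the hypothesis $k(m-1)>\dim\mathcal D$ survives the replacement. Because $\overline{\tilde{\mathcal D}}$ is a closed subset of $SH(\mathbb C^n)$, Lemma \ref{stab} then guarantees that \emph{every} $\mathcal R$-complete scheme is automatically stably $\mathcal R$-complete; hence it is enough to prove that almost all schemes are $\mathcal R$-complete.

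For that I would run the characterization of Section \ref{method}. A scheme $M=\Phi(U_1,\dots,U_k)$ fails to be $\mathcal R$-complete exactly when $h_M(X)=0$ for some $X\in\Delta(\mathcal R)-\{0\}$; as $\overline{\tilde{\mathcal D}}$ represents $\Delta(\mathcal R)-\{0\}$, this forces some $X\in\overline{\tilde{\mathcal D}}$ with $h_M(X)=0$, i.e.\ $\langle j|U_iXU_i^\dagger|j\rangle=\text{tr}(Q^i_jX)=0$ for all $i,j$ -- precisely the equations $p_i=0$ of \eqref{eqnblaaaa}, the isometry equations $q_i^{jl}=0$ holding automatically on $\prod_iU(m,n)$. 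I would then form the incidence set $\mathcal Z\subseteq\prod_iU(m,n)\times\overline{\tilde{\mathcal D}}$ cut out by \eqref{eqnblaaaa}, so the bad locus sits inside the projection $\pi_1(\mathcal Z)$ onto $\prod_iU(m,n)$. Invoking Lemma \ref{LEMvN} with $X$ ranging over $\overline{\tilde{\mathcal D}}$ (its Remark), the system \eqref{eqnblaaaa} drops dimension maximally; peeling off the part of that drop that merely encodes the isometry constraints -- the passage from $\prod_iM(m,n,\mathbb C)$ to the Stiefel manifold $\prod_iU(m,n)$ -- the completeness equations $p_i=0$ still contribute an extra drop of at least $k(m-1)$ over $\prod_iU(m,n)\times\overline{\tilde{\mathcal D}}$. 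Thus $\dim\mathcal Z\le\dim\prod_iU(m,n)+\dim\overline{\tilde{\mathcal D}}-k(m-1)$, and since projections are semi-algebraic and do not raise dimension \cite{bochnak1998real}, $\dim\pi_1(\mathcal Z)\le\dim\mathcal Z$. The hypothesis $k(m-1)>\dim\mathcal D\ge\dim\overline{\tilde{\mathcal D}}$ then gives $\dim\pi_1(\mathcal Z)<\dim\prod_iU(m,n)$; a semi-algebraic set of strictly smaller dimension is a null set, so almost every $(U_1,\dots,U_k)$, and hence under $\Phi$ almost every $M\in\mathcal M_{1,k}^m(\mathbb C^n)$, is $\mathcal R$-complete and therefore stably $\mathcal R$-complete.

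The genuine difficulty is entirely quarantined in Lemma \ref{LEMvN}: the assertion that the completeness equations remain independent -- dropping the full $k(m-1)$ -- even \emph{after} restricting to rank-one POVMs is where the rank-one structure is essential, and it is this that the condition $k(m-1)>\dim\mathcal D$ is calibrated against. Granting the lemma, the only steps needing care are the bookkeeping of the isometry codimension, so that the net drop over $\prod_iU(m,n)$ is exactly $k(m-1)$, and the initial passage to a closed representing set, which is what lets Lemma \ref{stab} hand us stability at no extra cost.
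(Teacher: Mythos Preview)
Your proposal is correct and follows essentially the same route as the paper: replace $\mathcal D$ by the closed set $\overline{\tilde{\mathcal D}}\subseteq SH(\mathbb C^n)$ so that Lemma \ref{stab} upgrades completeness to stable completeness, form the incidence variety in $\prod_iU(m,n)\times\overline{\tilde{\mathcal D}}$, invoke Lemma \ref{LEMvN} (via its Remark) to get a dimension drop of $k(m-1)$ beyond the isometry constraints, and project. The one place the paper is more explicit than you is the passage from a null set in $\prod_iU(m,n)$ to a null set in $\mathcal M_{1,k}^m(\mathbb C^n)$: since the measure on $\mathcal M_{1,k}^m(\mathbb C^n)$ is defined as the pushforward along $\Phi$, the paper checks that $\pi_1(\mathcal Z)$ is invariant under the toral group $T=\prod_iT(m)$ acting on $\prod_iU(m,n)$, so that $\Phi^{-1}(\Phi(\pi_1(\mathcal Z)))=\pi_1(\mathcal Z)$ and the pushforward of a null set stays null; your phrase ``compatible with $\Phi$'' is gesturing at exactly this point.
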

\begin{remark}
Note that Theorem \ref{THMvN} reduces the problem of finding an $\mathcal{R}$-complete rank one POVM for some subset $\mathcal{R}\subseteq\mathcal{S}(\h)$ to finding a semi-algebraic subset $\mathcal{D}\subseteq H(\h)$ which represents $\Delta(\mathcal{R})-\{0\}$ and in this sense Theorem \ref{THMvN} guarantees the universality of rank one POVMs. Furthermore the quality of the result solely depends on the algebraic dimension of $\mathcal{D}$.
\end{remark}
The proof of this result can be found in Section \ref{proof}.

From this Theorem we directly obtain a Whitney type embedding result for rank one POVMs. Essentially, it is a direct consequence of the following lemma.
\begin{lemma}\label{lemdim}
Let $\mathcal{R}\subseteq\mathcal{S}(\h)$ be a semi-algebraic subset. Then $\dim(\Delta(\mathcal{R})-\{0\})\leq 2\dim\mathcal{R}$.
\end{lemma}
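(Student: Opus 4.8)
The plan is to realize $\Delta(\mathcal{R})$ as the image of $\mathcal{R}\times\mathcal{R}$ under the difference map and then invoke the standard dimension theory of semi-algebraic sets, just as in the dimension estimates already carried out in the proof of Lemma \ref{lemrank}. Throughout I tacitly assume $\mathcal{R}$ to be semi-algebraic, which is the case in every application in this paper and is what makes $\dim\mathcal{R}$ and the cited results meaningful.

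First I would consider the semi-algebraic map
\begin{align*}
\delta:\mathcal{R}\times\mathcal{R}&\to H(\mathbb{C}^n)\\
(X,Y)&\mapsto X-Y,
\end{align*}
which is semi-algebraic because subtraction is linear, hence polynomial, in the entries of $X$ and $Y$ under the identification $H(\h)\simeq\R^{(\dim\h)^2}$. By construction $\Delta(\mathcal{R})=\delta(\mathcal{R}\times\mathcal{R})$ is exactly its image. Since a semi-algebraic map does not increase dimension (Theorem 2.8.8 of \cite{bochnak1998real}), we obtain $\dim\Delta(\mathcal{R})\le\dim(\mathcal{R}\times\mathcal{R})$.

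Next I would apply the product formula for the dimension of semi-algebraic sets, $\dim(\mathcal{R}\times\mathcal{R})=2\dim\mathcal{R}$ (see \cite{bochnak1998real}), giving $\dim\Delta(\mathcal{R})\le 2\dim\mathcal{R}$. Finally, since $\Delta(\mathcal{R})-\{0\}$ is a subset of $\Delta(\mathcal{R})$, and the dimension of a semi-algebraic set dominates that of any of its semi-algebraic subsets, I conclude $\dim(\Delta(\mathcal{R})-\{0\})\le\dim\Delta(\mathcal{R})\le 2\dim\mathcal{R}$, as claimed.

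There is no substantive obstacle in this argument; the work is entirely bookkeeping. The only points that demand care are confirming that the difference map really is semi-algebraic (immediate, as it is linear), and invoking the product dimension formula in the correct form, although only the inequality direction is needed here. The removal of the single point $0$ is harmless since passing to a subset can only decrease the dimension.
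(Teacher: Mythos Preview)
Your proof is correct and follows essentially the same approach as the paper: realize $\Delta(\mathcal{R})$ as the image of $\mathcal{R}\times\mathcal{R}$ under the semi-algebraic difference map and invoke Theorem 2.8.8 of \cite{bochnak1998real} together with $\dim(\mathcal{R}\times\mathcal{R})=2\dim\mathcal{R}$. The only cosmetic difference is that the paper first removes the diagonal from $\mathcal{R}\times\mathcal{R}$ (and passes to the Zariski closure of $\mathcal{R}$ to cover arbitrary subsets), whereas you map first and then remove $\{0\}$; both routes yield the same bound.
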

\begin{proof}
We can assume w.l.o.g that $\mathcal{R}$ is algebraic, because if not we can take its Zariski closure \footnote{The algebraic dimension is invariant under taking the Zariski closure, see Proposition 2.8.2 of \cite{bochnak1998real}}. Let $\text{Diag}(\mathcal{R}\times \mathcal{R}):=\{(X,Y)\in\mathcal{R}\times \mathcal{R}:X=Y\}$. Noting that $\text{Diag}(\mathcal{R}\times \mathcal{R})$ is an algebraic set,  $\mathcal{D}:=(\mathcal{R}\times \mathcal{R})-\text{Diag}(\mathcal{R}\times \mathcal{R})$ is quasi-algebraic. But the semi-algebraic map
\begin{align*}
\phi:\mathcal{D}&\to\Delta(\mathcal{R})-\{0\}\\
(X_1,X_2)&\mapsto X_1-X_2
\end{align*} 
is surjective, and thus $\dim(\Delta(\mathcal{R})-\{0\})\leq \mathcal{D}=2\dim\mathcal{R}$ by Theorem 2.8.8 of \cite{bochnak1998real}.
\end{proof}

\begin{corollary}\label{whitney}
Let $\mathcal{R}\subseteq\mathcal{S}(\mathbb{C}^n)$ be a subset. If $k(m-1)>2\dim \mathcal{R}$, almost all measurement schemes $M\in\mathcal{M}_{1,k}^m(\mathbb{C}^n)$ are stably $\mathcal{R}$-complete.
\end{corollary}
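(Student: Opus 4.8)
The plan is to read off the corollary as an immediate consequence of the Universality Theorem \ref{THMvN} combined with the dimension bound of Lemma \ref{lemdim}. The only thing one actually has to supply is a semi-algebraic set $\mathcal{D}\subseteq H(\mathbb{C}^n)$ that represents $\Delta(\mathcal{R})-\{0\}$ and whose algebraic dimension is at most $2\dim\mathcal{R}$. Once such a $\mathcal{D}$ is in hand, the assumed inequality $k(m-1)>2\dim\mathcal{R}\geq\dim\mathcal{D}$ is exactly the hypothesis of Theorem \ref{THMvN}, and the conclusion follows verbatim.

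To build $\mathcal{D}$ I would first pass to the Zariski closure $\mathcal{R}':=\overline{\mathcal{R}}$ of $\mathcal{R}$ inside $H(\mathbb{C}^n)$. This is an algebraic set with $\dim\mathcal{R}'=\dim\mathcal{R}$, since the algebraic dimension is invariant under taking the Zariski closure (Proposition 2.8.2 of \cite{bochnak1998real}). I then set $\mathcal{D}:=\Delta(\mathcal{R}')-\{0\}$. As the image of the algebraic set $\mathcal{R}'\times\mathcal{R}'$ under the difference map $(X,Y)\mapsto X-Y$ with the origin deleted, $\mathcal{D}$ is semi-algebraic and satisfies $0\notin\mathcal{D}$. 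Running the argument of Lemma \ref{lemdim} on $\mathcal{R}'$ gives the bound $\dim\mathcal{D}=\dim(\Delta(\mathcal{R}')-\{0\})\leq 2\dim\mathcal{R}'=2\dim\mathcal{R}$. Finally, $\mathcal{D}$ represents $\Delta(\mathcal{R})-\{0\}$ for a trivial reason: because $\mathcal{R}\subseteq\mathcal{R}'$ we have the inclusion $\Delta(\mathcal{R})-\{0\}\subseteq\Delta(\mathcal{R}')-\{0\}=\mathcal{D}$, so whenever a measurement scheme $M$ and some $X\in\Delta(\mathcal{R})-\{0\}$ satisfy $h_M(X)=0$, that very same $X$ already lies in $\mathcal{D}$ and one may take $X'=X$ in the defining property of ``represents''.

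With $\mathcal{D}$ thus constructed the proof closes by a single application of Theorem \ref{THMvN}, yielding that almost all $M\in\mathcal{M}_{1,k}^m(\mathbb{C}^n)$ are stably $\mathcal{R}$-complete. I do not expect a genuine obstacle here, as the statement is a bookkeeping corollary of the two earlier results; the one point that deserves care is the semi-algebraicity of the representing set. This is precisely why I would route the argument through the Zariski closure $\mathcal{R}'$ rather than attempting to use $\Delta(\mathcal{R})-\{0\}$ directly, since for an arbitrary subset $\mathcal{R}$ the latter need not be semi-algebraic while the dimension estimate of Lemma \ref{lemdim} is nonetheless unaffected by the passage to the closure.
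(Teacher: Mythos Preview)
Your proposal is correct and follows essentially the same route as the paper's own proof: pass to the Zariski closure of $\mathcal{R}$, take $\mathcal{D}=\Delta(\mathcal{R}')-\{0\}$, invoke Lemma \ref{lemdim} for the dimension bound, and apply Theorem \ref{THMvN}. You spell out in slightly more detail why the passage to the closure is needed (to guarantee semi-algebraicity of $\mathcal{D}$) and why the representing property holds via the trivial inclusion, but there is no substantive difference in strategy.
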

\begin{proof}
We can assume w.l.o.g. that $\mathcal{R}$ is algebraic because if not we can consider its Zariski closure. By the proof of Lemma \ref{lemdim}, $\Delta(\mathcal{R})-\{0\}$ is semi-algebraic and furthermore $\dim(\Delta(\mathcal{R})-\{0\})\leq 2\dim\mathcal{R}$. Finally, Theorem \ref{THMvN} with $\mathcal{D}=\Delta(\mathcal{R})-\{0\}$ concludes the proof.
\end{proof}

Two special cases of this Theorem may be of particular interest.
\begin{corollary}
Let $\mathcal{R}\subseteq\mathcal{S}(\mathbb{C}^n)$ be a subset. If $k(n-1)>2\dim \mathcal{R}$, almost all tuples of $k$ von Neumann measurement $M\in\mathcal{M}_{vN}^k(\mathbb{C}^n)$ are $\mathcal{R}$-complete.
\end{corollary}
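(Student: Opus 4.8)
The plan is to obtain this statement as the immediate specialization $m=\dim\mathbb{C}^n=n$ of Corollary \ref{whitney}. The only fact that must be invoked is the identification recorded in Section II: for $m=\dim\h$ the set of $m$-dimensional rank one POVMs is precisely the set of rank one von Neumann measurements, so that $\mathcal{M}_{vN}^k(\mathbb{C}^n)=\mathcal{M}_{1,k}^n(\mathbb{C}^n)$. Under this identification the ambient semi-algebraic set over which the genericity statement is made is unchanged, and the notion of ``almost all'' is the same one (the measure described in the appendix) in both corollaries.

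With this in hand I would simply substitute $m=n$ into the hypothesis of Corollary \ref{whitney}. The inequality $k(m-1)>2\dim\mathcal{R}$ becomes exactly $k(n-1)>2\dim\mathcal{R}$, which is the hypothesis of the present corollary. Corollary \ref{whitney} then yields that almost all $M\in\mathcal{M}_{1,k}^n(\mathbb{C}^n)=\mathcal{M}_{vN}^k(\mathbb{C}^n)$ are stably $\mathcal{R}$-complete. Finally, a stably $\mathcal{R}$-complete scheme is in particular $\mathcal{R}$-complete, since by Definition \ref{defstab} the scheme $M$ itself lies in the neighbourhood $\mathcal{N}$ on which $\mathcal{R}$-completeness holds; discarding this strengthening gives exactly the asserted conclusion.

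There is no genuine obstacle here: the content of the statement is carried entirely by Corollary \ref{whitney} (and behind it by Theorem \ref{THMvN} and Lemma \ref{lemdim}), and the corollary is recorded separately only to make explicit the practically important case of von Neumann measurements, in which each POVM has the minimal possible number $n$ of outcomes. The one point worth verifying is that $\mathcal{M}_{1,k}^n(\mathbb{C}^n)$ is nonempty and carries the dimension implicitly used in the genericity argument, but this is guaranteed since $m=n\geq\dim\h$, which is precisely the standing assumption $m\geq\dim\h$ under which $\mathcal{M}_1^m(\h)$ was defined.
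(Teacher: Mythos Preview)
Your proposal is correct and matches the paper's own proof, which also simply specializes Corollary \ref{whitney} to $m=n$. You have merely made explicit the identification $\mathcal{M}_{vN}^k(\mathbb{C}^n)=\mathcal{M}_{1,k}^n(\mathbb{C}^n)$ and the trivial implication from stably $\mathcal{R}$-complete to $\mathcal{R}$-complete, both of which the paper leaves implicit.
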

\begin{proof}
This immediately follows from Corollary \ref{whitney} for $m=n$.
\end{proof}
\begin{corollary}\label{corrrankone}
Let $\mathcal{R}\subseteq\mathcal{S}(\mathbb{C}^n)$ be a subset. If $m-1>2\dim \mathcal{R}$, almost all rank one POVMs $M\in\mathcal{M}_{1}^m(\mathbb{C}^n)$ are stably $\mathcal{R}$-complete.
\end{corollary}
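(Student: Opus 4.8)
The plan is to recognize this statement as nothing more than the single-POVM specialization ($k=1$) of Corollary \ref{whitney}, in exact analogy with how the preceding von Neumann corollary was obtained by setting $m=n$. First I would invoke the identification fixed in the Preliminaries, whereby a POVM $P$ is identified with the measurement scheme containing only $P$. Under this identification the set $\mathcal{M}_1^m(\mathbb{C}^n)$ of $m$-dimensional rank one POVMs coincides with $\mathcal{M}_{1,1}^m(\mathbb{C}^n)$, the induced map $h_P$ agrees with the induced map $h_M$ of the corresponding one-element scheme, and hence the notions of $\mathcal{R}$-completeness and stable $\mathcal{R}$-completeness transfer verbatim. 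This bookkeeping is the only point that requires any attention, and it is immediate.

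With the identification in place, I would simply apply Corollary \ref{whitney} with $k=1$. The hypothesis $k(m-1)>2\dim\mathcal{R}$ reduces to $m-1>2\dim\mathcal{R}$, which is exactly the assumption of the present statement, and the conclusion that almost all $M\in\mathcal{M}_{1,1}^m(\mathbb{C}^n)$ are stably $\mathcal{R}$-complete is precisely the desired claim after translating back through the identification. No new estimate is needed.

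Should one prefer to unwind the dependency chain rather than cite Corollary \ref{whitney} directly, the route is equally short: replace $\mathcal{R}$ by its Zariski closure (which leaves the dimension unchanged), take $\mathcal{D}=\Delta(\mathcal{R})-\{0\}$, which is semi-algebraic, satisfies $0\notin\mathcal{D}$, and trivially represents itself, bound $\dim\mathcal{D}\le 2\dim\mathcal{R}$ via Lemma \ref{lemdim}, and then apply Theorem \ref{THMvN} with $k=1$, so that the hypothesis $m-1>2\dim\mathcal{R}\ge\dim\mathcal{D}$ yields generic stable $\mathcal{R}$-completeness. I do not expect any genuine obstacle here: the entire substance resides in Lemma \ref{LEMvN}, which underpins Theorem \ref{THMvN}, together with the dimension count of Lemma \ref{lemdim}, both already established.
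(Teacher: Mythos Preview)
Your proposal is correct and follows precisely the paper's own proof, which simply states that the result follows immediately from Corollary~\ref{whitney} for $k=1$. Your additional bookkeeping about the identification $\mathcal{M}_1^m(\mathbb{C}^n)=\mathcal{M}_{1,1}^m(\mathbb{C}^n)$ and the optional unwinding through Lemma~\ref{lemdim} and Theorem~\ref{THMvN} merely spell out what the paper leaves implicit.
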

\begin{proof}
This immediately follows from Corollary \ref{whitney} for $k=1$.
\end{proof}
\begin{remark}
Effectively we have the bound $m-1>\max\{2\dim \mathcal{R},n-2\}$ which is due to the fact that a rank one POVM on $\mathbb{C}^n$ has to be at least $n$-dimensional. If we relax this to merely requiring the POVM to be projective this shortcoming can be avoided, i.e. for projective POVMs  $m-1=2\dim \mathcal{R}+1$ can be attained. This can be seen by modifying the proof of Lemma \ref{LEMvN}.
\end{remark}

\subsection*{Rank One POVMs for States of Bounded Rank and States of Fixed Spectrum}

In this section we improve the Whitney type bounds of Corollary \ref{whitney} for the cases in which the subset $\mathcal{R}\subseteq\mathcal{S}(\h)$ is given by the states of bounded rank or the states of fixed spectrum. The results we obtain in this section easily follow from theorem \ref{THMvN}. Let us note that all results of this section can be immediately transferred to measurement schemes which fulfil a universality property analogous to theorem \ref{THMvN}.

In the following, $r\in\{1,\hdots,[n/2]\}$. Denote by $S_r(\h)$ the states with rank at most $r$, i.e. $S_r(\h):=\{\varrho\in \mathcal{S}(\h): \text{rank}(\varrho)\leq r\}$. We write $S_r^n$ as shorthand for $S_r(\mathbb{C}^n)$.

In analogy to the proof of Theorem \ref{frame}, we first construct the set we use to represent $\Delta(S_r(\h))-\{0\}$ and determine its dimension.
\begin{lemma}\label{set}
The set $\mathcal{D}:=\{X\in\mathcal{P}_{2r}(\h):\ \text{tr}(X)=0,\ \text{tr}(X^2)=2\}$ is an algebraic set that represents $\Delta(S_r(\h))-\{0\}$ and $\dim\mathcal{D}=4r(\dim\h-r)-2$.
\end{lemma}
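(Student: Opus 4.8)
The plan is to read off both the algebraicity and the dimension of $\mathcal{D}$ directly from Corollary \ref{corrank}, and then to verify the representing property by a homogeneity (scaling) argument of exactly the kind used in the proof of Proposition \ref{frame}. The key insight is that $\mathcal{D}$ is not a genuinely new object but a relabelling of a set already analysed.

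First I would observe that, writing $n=\dim\h$ so that $\mathcal{P}_{2r}(\h)=\mathcal{P}_{2r}^n$, the set $\mathcal{D}$ coincides with the set $\mathcal{D}_2$ of Corollary \ref{corrank} after replacing $r$ by $2r$. Since $r\in\{1,\dots,[n/2]\}$ we have $2r\le n$, so Lemma \ref{lemrank} and Corollary \ref{corrank} apply verbatim under this substitution. This immediately yields that $\mathcal{D}$ is a real algebraic set and that $\dim\mathcal{D}=(2r)(2n-2r)-2=4r(n-r)-2=4r(\dim\h-r)-2$, the claimed value. Thus the algebraicity and the dimension count require no work beyond identifying the correct substitution.

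The substance of the lemma is therefore the representing property, in the sense of Section \ref{method}. Here the structural observation I would establish first is that differences of low-rank states are automatically trace-zero and of rank at most $2r$: for $\varrho,\sigma\in S_r(\h)$ the operator $X=\varrho-\sigma$ satisfies $\text{tr}(X)=\text{tr}(\varrho)-\text{tr}(\sigma)=0$ and $\text{rank}(X)\le\text{rank}(\varrho)+\text{rank}(\sigma)\le 2r$, so $X\in\mathcal{P}_{2r}(\h)$. Hence $\Delta(S_r(\h))$ lies in the trace-zero slice of $\mathcal{P}_{2r}(\h)$, and one also checks $0\notin\mathcal{D}$ since $\text{tr}(X^2)=2\ne 0$ on $\mathcal{D}$. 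With this in hand the representation follows by scaling: given a measurement scheme $M$ and $X\in\Delta(S_r(\h))-\{0\}$ with $h_M(X)=0$, I would set $X^\prime:=\sqrt{2}\,X/\|X\|_2$, which by the previous sentence is trace-zero, hermitian, of rank at most $2r$, and satisfies $\text{tr}(X^{\prime 2})=\|X^\prime\|_2^2=2$, so $X^\prime\in\mathcal{D}$; linearity of $h_M$ then gives $h_M(X^\prime)=(\sqrt{2}/\|X\|_2)\,h_M(X)=0$, exactly the defining condition.

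I do not expect any serious obstacle: the dimension is inherited wholesale from Corollary \ref{corrank}, and the representation is a one-line homogeneity argument. The only point demanding genuine care is the structural characterisation of $\Delta(S_r(\h))$ — in particular the extra constraint $\text{tr}(X)=0$ coming from the unit-trace normalisation of states, which is precisely what produces the $-2$ here in place of the $-1$ appearing for the unconstrained rank set $\mathcal{P}_r$ in Proposition \ref{frame}.
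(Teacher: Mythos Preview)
Your proposal is correct and follows essentially the same approach as the paper: identify $\mathcal{D}$ with the set $\mathcal{D}_2$ of Corollary \ref{corrank} (with $r$ replaced by $2r$) to get algebraicity and the dimension, then verify the representing property by the scaling argument together with the observations that differences of states are trace-zero and of rank at most $2r$. If anything, your normalisation $X'=\sqrt{2}\,X/\|X\|_2$ is more careful than the paper's own proof, which writes $X'=X/\|X\|_2$ (matching $\text{tr}(X'^2)=1$) despite the lemma statement requiring $\text{tr}(X^2)=2$.
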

\begin{proof}
Note that $\mathcal{S}_r(\h)\subseteq \mathcal{P}_r(\h)$ and thus $\Delta(\mathcal{S}_r(\h))\subseteq \Delta(\mathcal{P}_r(\h))=\mathcal{P}_{2r}(\h)$. $\mathcal{P}_{2r}(\h)$ is algebraic by Lemma \ref{lemrank} and hence $\mathcal{P}_{2r}(\h)-\{0\}$ represents $\Delta(\mathcal{S}_r(\h))-\{0\}$. In fact $\Delta(\mathcal{S}_r(\h))-\{0\}$ can be represented by a smaller set. Namely one can consider set $\mathcal{D}:=\{X\in\mathcal{P}_{2r}(\h):\ \|X\|_2^2=tr(X^2)=1,\ tr(X)=0\}$. Note that $\mathcal{D}$ is algebraic by Corollary \ref{corrank} and that $0\notin\mathcal{D}$. The equation $tr(X)=0$ just considers the fact that states have unit trace. Next consider a measurement scheme $M$ and $X\in\Delta(\mathcal{S}_r(\h))-\{0\}$ such that $h_M(X)=0$. Then, there is $X^\prime:=\frac{X}{\|X\|_2}\in\mathcal{D}$ such that $h_M(X^\prime)=0$. Hence $\mathcal{D}$ indeed represents $\Delta(\mathcal{S}_r(\h))-\{0\}$. Finally, by Corollary \ref{corrank}, we have $\dim(\mathcal{D})=\dim(\mathcal{P}_{2r}(\h))-2=4r(n-r)-2$.
\end{proof}
\begin{theorem}\label{thmvN}
If $k(m-1)\ge 4r(n-r)-1$, almost all measurement schemes $M\in M_{1,k}^m(\mathbb{C}^n)$ are stably $\mathcal{S}_r^n$-complete.
\end{theorem}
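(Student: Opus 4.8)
The plan is to combine the representing set $\mathcal{D}$ constructed in Lemma \ref{set} with the universality result of Theorem \ref{THMvN}. By Lemma \ref{set} the set $\mathcal{D}=\{X\in\mathcal{P}_{2r}(\mathbb{C}^n):\ \text{tr}(X)=0,\ \text{tr}(X^2)=2\}$ is an algebraic set that represents $\Delta(\mathcal{S}_r^n)-\{0\}$, and its dimension is $\dim\mathcal{D}=4r(n-r)-2$. Theorem \ref{THMvN} asserts that whenever $k(m-1)>\dim\mathcal{D}$, almost all measurement schemes $M\in\mathcal{M}_{1,k}^m(\mathbb{C}^n)$ are stably $\mathcal{R}$-complete for the subset $\mathcal{R}$ represented by $\mathcal{D}$. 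Here $\mathcal{R}=\mathcal{S}_r^n$.

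The only thing to check is that the hypothesis of the present theorem feeds correctly into Theorem \ref{THMvN}. We are given $k(m-1)\ge 4r(n-r)-1$. Since $\dim\mathcal{D}=4r(n-r)-2$, this is exactly the condition
\begin{align*}
k(m-1)\ge 4r(n-r)-1 = (4r(n-r)-2)+1 > \dim\mathcal{D},
\end{align*}
so the strict inequality $k(m-1)>\dim\mathcal{D}$ required by Theorem \ref{THMvN} holds. Thus I would simply invoke Theorem \ref{THMvN} with $\mathcal{R}=\mathcal{S}_r^n$ and $\mathcal{D}$ as in Lemma \ref{set} to conclude that almost all $M\in\mathcal{M}_{1,k}^m(\mathbb{C}^n)$ are stably $\mathcal{S}_r^n$-complete.

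There is essentially no obstacle here: the real work has already been discharged, on the one hand in Lemma \ref{LEMvN} (the transversality-type statement that imposing the equations decreases the dimension by the full $n^2+k(m-1)$, which underlies Theorem \ref{THMvN}), and on the other hand in Lemma \ref{set} and Corollary \ref{corrank} (the dimension count for the representing set). The present statement is a packaging corollary that matches the dimension bound $4r(n-r)-2$ against the completeness threshold. The only point deserving a word of care is the passage from the weak inequality in the hypothesis to the strict inequality required by Theorem \ref{THMvN}, which is immediate because the dimension is an integer and $4r(n-r)-1$ strictly exceeds $4r(n-r)-2$. The ``stably'' part of the conclusion comes for free, since Theorem \ref{THMvN} already delivers stable completeness.
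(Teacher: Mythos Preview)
Your proof is correct and follows exactly the same approach as the paper: invoke Lemma \ref{set} to obtain the representing set $\mathcal{D}$ with $\dim\mathcal{D}=4r(n-r)-2$, then apply Theorem \ref{THMvN}. The paper's proof is a one-line reference to these two results; your version simply spells out the inequality check more explicitly.
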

\begin{proof}
Using the set of Lemma \ref{set} to represent $\Delta(\mathcal{R})-\{0\}$, the result follows directly form Theorem \ref{THMvN}.
\end{proof}

As explained in Section IV.A of \cite{kech}, the lower bounds on the immersion dimension of complex flag manifolds of \cite{walgenbach2001lower} transfer to lower bounds on the dimension of $\mathcal{S}_r(\h)$-complete POVMs. In addition, the discussion following this explanation suggests that the upper bound on $m$ we obtain here is close to optimal.

Next, let us state some corollaries of this theorem.
\begin{corollary}\label{corrvN}
If $k(n-1) \ge 4r(n-r)-1$, almost all tuples of $k$ von Neumann measurements $M\in\mathcal{M}_{\text{vN}}^k(\mathbb{C}^n)$ are stably $\mathcal{S}_r^n$-complete.
\end{corollary}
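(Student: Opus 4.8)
The plan is to deduce this corollary as the special case $m=n$ of Theorem \ref{thmvN}, in exactly the way the earlier von Neumann corollary was deduced from Corollary \ref{whitney}. The one structural fact I would invoke first is the identification already recorded in the Preliminaries: a rank one von Neumann measurement on $\mathbb{C}^n$ is precisely an $n$-dimensional rank one POVM, so that as semi-algebraic parameter spaces $\mathcal{M}_{\text{vN}}^k(\mathbb{C}^n)=\mathcal{M}_{1,k}^n(\mathbb{C}^n)$, with the genericity measure on the former being the restriction of the measure used for the latter.

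Having made this identification, I would simply substitute $m=n$ into Theorem \ref{thmvN} (here $k$ denotes the number of von Neumann measurements). Its hypothesis $k(m-1)\ge 4r(n-r)-1$ then reads $k(n-1)\ge 4r(n-r)-1$, which is the hypothesis of the corollary, and its conclusion asserts that almost all $M\in\mathcal{M}_{1,k}^n(\mathbb{C}^n)=\mathcal{M}_{\text{vN}}^k(\mathbb{C}^n)$ are stably $\mathcal{S}_r^n$-complete. That is exactly the claim.

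I do not expect a genuine obstacle here. In a setting where von Neumann measurements sat inside the POVM parameter space as a \emph{proper} subvariety, a generic statement on the ambient space would not automatically descend to the subvariety; but in our case the von Neumann measurements are literally the slice $m=n$ of the family already treated in Theorem \ref{thmvN}, so both the dimension count supplied by Lemma \ref{LEMvN} (via Lemma \ref{set}) and the resulting measure-zero conclusion transfer verbatim. The only point I would spell out is the bookkeeping of the measure on $\mathcal{M}_{\text{vN}}^k(\mathbb{C}^n)$, which I would treat exactly as in the proof of Proposition \ref{frame}: restrict the relevant measure to the $m=n$ parameter space, so that a set of measure zero in $\mathcal{M}_{1,k}^n(\mathbb{C}^n)$ remains of measure zero in $\mathcal{M}_{\text{vN}}^k(\mathbb{C}^n)$.
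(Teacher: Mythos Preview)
Your proposal is correct and matches the paper's own proof, which is the single line ``This follows from Theorem \ref{thmvN} for $m=n$.'' The only cosmetic discrepancy is notational: in the corollary as stated the letter $m$ denotes the number of von Neumann measurements (playing the role of $k$ in Theorem \ref{thmvN}), so you should phrase the substitution as ``set the POVM dimension equal to $n$ and relabel the number of measurements from $k$ to $m$'' rather than keeping $k$ throughout.
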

\begin{proof}
This follows from Theorem \ref{thmvN} for $m=n$.
\end{proof}
For $r=1$ this reproduces the main result of \cite{mondragon2013determination}. In Table \ref{constantspectrumbounds} you can see how this result compares to the lower bounds of \cite{walgenbach2001lower} for some explicit scenarios.
 \begin{table}[h]
  \setlength{\tabcolsep}{10mm}
 \begin{tabular}{lcccccccccccc}
l\textbackslash k   &         2 & 3         & 4\\
\\
                 5  & 6/7\\
  \\
                 6  & 6/7\\ 
  \\
                 7  & 7/7 & 9/10\\
 \\
                 8  & 7/7 & 9/10\\
 \\
                 9  & 7/8 & 9/10  & 12/12\\
 \\
                 10 & 7/8 & 10/10 & 12/13\\

 \end{tabular}
 \caption{Lower bounds on the minimal number of von Neumann measurements necessary to discriminate any two quantum states of rank at most $k$ from \cite{walgenbach2001lower} for $\mathcal{S}_k^{k+l}$./ Upper bounds on the minimal number of von Neumann measurements necessary to discriminate any two quantum states of rank at most $k$ from Corollary \ref{corrvN} for $\mathcal{S}_k^{k+l}$.}
\label{constantspectrumbounds}
\end{table}

\begin{corollary}\label{corpars}
If $m-1 \ge 4r(n-r)-1$, almost all rank one POVM $P\in\mathcal{M}_1^m(\mathbb{C}^n)$ are stably $\mathcal{S}_r^n$-complete.
\end{corollary}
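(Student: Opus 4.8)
The plan is to obtain this statement as the degenerate $k=1$ instance of Theorem \ref{thmvN}, in exact parallel to the way Corollary \ref{corrvN} is obtained by taking $m=n$. First I would recall from the definitions in Section II that a single $m$-dimensional rank one POVM $P\in\mathcal{M}_1^m(\mathbb{C}^n)$ is, under the stated identification of a POVM with the measurement scheme containing only $P$, precisely an element of $\mathcal{M}_{1,1}^m(\mathbb{C}^n)$. Hence the ambient space of measurement schemes appearing in Theorem \ref{thmvN} collapses to $\mathcal{M}_1^m(\mathbb{C}^n)$ when $k=1$, and the notions of $\mathcal{R}$-completeness and stable $\mathcal{R}$-completeness are unchanged under this identification.

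Next I would check that the hypothesis matches: substituting $k=1$ into the bound $k(m-1)\ge 4r(n-r)-1$ of Theorem \ref{thmvN} yields exactly the condition $m-1\ge 4r(n-r)-1$ assumed here. Since Theorem \ref{thmvN} already delivers the stronger conclusion of stable $\mathcal{S}_r^n$-completeness for almost all schemes, no separate stability argument is required; the stability, which ultimately rests on Lemma \ref{stab} together with the dimension count of Lemma \ref{set}, is inherited verbatim.

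There is essentially no obstacle to this argument, as it is a pure specialization rather than a new estimate. The only point meriting a moment's care is the bookkeeping around the POVM/measurement-scheme identification, so that the phrase ``almost all $P\in\mathcal{M}_1^m(\mathbb{C}^n)$'' is read with respect to the same measure that Theorem \ref{thmvN} places on $\mathcal{M}_{1,k}^m(\mathbb{C}^n)$ (the measure described in Appendix \ref{appendixD}); under the $k=1$ identification these measures coincide, so the measure-zero exceptional set transfers directly. The proof therefore reduces to the single line: this follows immediately from Theorem \ref{thmvN} for $k=1$.
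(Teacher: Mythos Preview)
Your proposal is correct and matches the paper's own proof essentially verbatim: the paper simply states that the corollary follows from Theorem \ref{thmvN} for $k=1$. Your additional remarks about the identification $\mathcal{M}_1^m(\mathbb{C}^n)=\mathcal{M}_{1,1}^m(\mathbb{C}^n)$ and the compatibility of measures are correct but not strictly needed, as they are immediate from the definitions in Section II and Appendix \ref{appendixD}.
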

\begin{proof}
This follows from \ref{thmvN} for $k=1$.
\end{proof}

Finally we consider states of fixed spectrum. Let $s$ \footnote{A spectrum on $\mathbb{C}^n$ is a multiset of $n$ increasingly ordered positive real numbers that sum up to one. We call the elements of $s$ eigenvalues.}  
be a spectrum on $\mathbb{C}^n$ and denote by $\mathcal{S}_s^n\subseteq\mathcal{S}(\mathbb{C}^n)$ the states with spectrum $s$.

\begin{corollary}
Let $s$ be a spectrum on $\mathbb{C}^n$ such that the highest multiplicity of an eigenvalue in $s$ is $n-r$. Then, if $k(n-1)\ge 4r(n-r)-1$, almost all tuples of $k$ von Neumann measurements $M\in\mathcal{M}_{\text{vN}}^k(\mathbb{C}^n)$ are stably $\mathcal{S}_s^n$-complete.
\end{corollary}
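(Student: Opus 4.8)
The plan is to reduce the statement to Theorem \ref{THMvN} with $m=n$ by producing a semi-algebraic set $\mathcal{D}$ that represents $\Delta(\mathcal{S}_s^n)-\{0\}$ and has dimension $4r(n-r)-2$; the hypothesis $k(n-1)\ge 4r(n-r)-1$ is then exactly the condition $k(n-1)>\dim\mathcal{D}$, and the conclusion follows immediately. In fact I expect the very same representing set used in Lemma \ref{set} to work here, so the only genuinely new input is to explain why the fixed-spectrum constraint again forces the relevant differences to have rank at most $2r$.

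The key observation is that fixing the spectrum reduces to the bounded-rank case after a shift by a multiple of the identity. Let $\lambda$ be the eigenvalue of $s$ whose multiplicity equals $n-r$. For any $X\in\mathcal{S}_s^n$ the operator $X-\lambda\id_n$ has the eigenvalue $0$ with multiplicity at least $n-r$, hence $\text{rank}(X-\lambda\id_n)\le r$, i.e. $X-\lambda\id_n\in\mathcal{P}_r^n$. Consequently, for any two states $X,Y\in\mathcal{S}_s^n$ the shift cancels in the difference and
\[
X-Y=(X-\lambda\id_n)-(Y-\lambda\id_n)\in\Delta(\mathcal{P}_r^n)=\mathcal{P}_{2r}^n,
\]
while $\text{tr}(X-Y)=0$ because both states have unit trace. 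Thus $\Delta(\mathcal{S}_s^n)-\{0\}$ is contained in the traceless part of $\mathcal{P}_{2r}^n$, exactly as for $\mathcal{S}_r^n$ in Lemma \ref{set}.

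Next I would take $\mathcal{D}:=\{X\in\mathcal{P}_{2r}(\h):\ \text{tr}(X)=0,\ \text{tr}(X^2)=2\}$ and verify that it represents $\Delta(\mathcal{S}_s^n)-\{0\}$ by the normalization argument of Lemma \ref{set}: if $M$ is a measurement scheme and $X\in\Delta(\mathcal{S}_s^n)-\{0\}$ satisfies $h_M(X)=0$, then $X$ is a nonzero traceless element of $\mathcal{P}_{2r}^n$ by the previous paragraph, and the rescaled operator $X':=\sqrt{2}\,X/\|X\|_2$ lies in $\mathcal{D}$ with $h_M(X')=0$. Since $\text{tr}(X^2)=2\neq 0$ we also have $0\notin\mathcal{D}$. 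By Corollary \ref{corrank}, $\dim\mathcal{D}=4r(n-r)-2$. Applying Theorem \ref{THMvN} with this $\mathcal{D}$ and $m=n$ then yields that almost all $M\in\mathcal{M}_{\text{vN}}^k(\mathbb{C}^n)$ are stably $\mathcal{S}_s^n$-complete, because $k(n-1)\ge 4r(n-r)-1$ is equivalent to $k(n-1)>4r(n-r)-2=\dim\mathcal{D}$.

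I do not anticipate a serious obstacle: the entire content lies in the rank bound $\text{rank}(X-\lambda\id_n)\le r$ of the second paragraph, which is where the multiplicity hypothesis on $s$ enters. Once the differences are placed inside the traceless part of $\mathcal{P}_{2r}^n$, the situation is formally identical to the bounded-rank case already treated in Theorem \ref{thmvN}, so the dimension count and the appeal to the universality theorem require no new computation.
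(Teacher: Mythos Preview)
Your proposal is correct and follows the same route as the paper: reduce to Theorem \ref{THMvN} (equivalently Theorem \ref{thmvN} with $m=n$) by showing that the set $\mathcal{D}$ of Lemma \ref{set} represents $\Delta(\mathcal{S}_s^n)-\{0\}$. The only difference is that the paper outsources the rank bound $\operatorname{rank}(X-\lambda\id_n)\le r$ to an external reference (Lemma IV.3 of \cite{kech}), whereas you spell out this shift-by-$\lambda\id_n$ argument explicitly; the remainder is identical.
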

\begin{proof}
This follows directly from Theorem \ref{thmvN} for $m=n$ noting that $\Delta(\mathcal{S}_s^n)-\{0\}$ can be represented by the set of Lemma \ref{set} \footnote{For more details see Lemma IV.3 of \cite{kech}.}.
\end{proof}

\begin{corollary}
Let $s$ be a spectrum on $\mathbb{C}^n$ such that the highest multiplicity of an eigenvalue in $s$ is $n-r$. Then, if $m-1\ge 4r(n-r)-1$, almost all POVMs $P\in M_{1}^m(\mathbb{C}^n)$ are stably $\mathcal{S}_s^n$-complete.
\end{corollary}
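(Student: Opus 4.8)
The plan is to deduce this from the universality Theorem \ref{THMvN} with $k=1$, in exact parallel to how the preceding von Neumann corollary handles the case $m=n$. I would set $\mathcal{R}=\mathcal{S}_s^n$ and take as representing set the algebraic set $\mathcal{D}:=\{X\in\mathcal{P}_{2r}(\mathbb{C}^n):\ \text{tr}(X)=0,\ \text{tr}(X^2)=2\}$ from Lemma \ref{set}. By Corollary \ref{corrank} this set has dimension $4r(n-r)-2$, and the hypothesis $m-1\ge 4r(n-r)-1$ gives $m-1>\dim\mathcal{D}$. Hence, once $\mathcal{D}$ is shown to represent $\Delta(\mathcal{S}_s^n)-\{0\}$, Theorem \ref{THMvN} with $k=1$ yields immediately that almost all $P\in\mathcal{M}_1^m(\mathbb{C}^n)$ are stably $\mathcal{S}_s^n$-complete.

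The only nonroutine point, and the step I expect to be the crux, is verifying that $\mathcal{D}$ represents $\Delta(\mathcal{S}_s^n)-\{0\}$; this is where the multiplicity hypothesis is used. Let $\lambda$ be an eigenvalue of $s$ attaining the maximal multiplicity $n-r$. Every $\varrho\in\mathcal{S}_s^n$ is unitarily conjugate to a diagonal matrix with spectrum $s$, so $\varrho-\lambda\,\id_n$ has at most $r$ nonzero eigenvalues and therefore $\text{rank}(\varrho-\lambda\,\id_n)\le r$. Consequently, for any $\varrho_1,\varrho_2\in\mathcal{S}_s^n$ the difference $\varrho_1-\varrho_2=(\varrho_1-\lambda\,\id_n)-(\varrho_2-\lambda\,\id_n)$ has rank at most $2r$, so $\Delta(\mathcal{S}_s^n)\subseteq\mathcal{P}_{2r}(\mathbb{C}^n)$, while $\text{tr}(\varrho_1-\varrho_2)=0$ since both states have unit trace.

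It then remains to normalize onto $\mathcal{D}$ in the usual way: if a measurement scheme $M$ satisfies $h_M(X)=0$ for some nonzero $X\in\Delta(\mathcal{S}_s^n)$, then $X':=\sqrt{2}\,X/\|X\|_2$ still lies in $\mathcal{P}_{2r}(\mathbb{C}^n)$, has $\text{tr}(X')=0$ and $\text{tr}(X'^2)=2$, hence $X'\in\mathcal{D}$, and $h_M(X')=0$ by linearity. This establishes the representation property and completes the argument; it is precisely the content invoked as Lemma IV.3 of \cite{kech} in the preceding corollary, now applied with $k=1$ rather than $m=n$.
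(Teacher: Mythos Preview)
Your proof is correct and follows essentially the same route as the paper: apply the universality theorem with $k=1$ using the representing set $\mathcal{D}$ of Lemma~\ref{set}, after checking (via the $\lambda\,\id_n$ shift trick, which is the content of Lemma~IV.3 of \cite{kech}) that $\mathcal{D}$ also represents $\Delta(\mathcal{S}_s^n)-\{0\}$. The paper phrases this as an application of Theorem~\ref{thmvN} for $k=1$ rather than Theorem~\ref{THMvN} directly, but since the former is itself proved from the latter with exactly this $\mathcal{D}$, the arguments coincide.
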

\begin{proof}
This follows directly from Theorem \ref{thmvN} for $k=1$ noting that $\Delta(\mathcal{S}_s^n)-\{0\}$ can be represented by the set of Lemma \ref{set}.
\end{proof}

\section{ Quantum Tomography with Local Observables}
 
In this section we address the problem of reconstructing states of multipartite systems from the expectation values of local observables.

Let $\h=\bigotimes_{i=1}^k\mathbb{C}^{n_i}$ and let $n:=\prod_{i=1}^kn_i$. We define the set $H_{loc}(\h)$ of local observables on $\h$ by
\begin{align*}
H_{loc}(\h):=\{O_1\otimes \hdots\otimes O_k:O_i\in SH(\mathbb{C}^{n_i})\}\subseteq H(\h).
\end{align*}
Just like a POVM, a tuple of observables $O:=(O_1,\hdots,O_m)\in H(\h)^m$, induces a linear map $h_O:H(\h)\to \R^m,\ X\mapsto (\text{tr}(O_1X),\hdots,\text{tr}(O_mX))$ and hence Definition \ref{defcomplete} and \ref{defstab} naturally generalize to finite tuples of observables. 

The following theorem is the analogue of Theorem \ref{THMvN} and it is the main result of this section.
\begin{theorem}\label{THMlocal}(Universality)
For $\mathcal{R}\subseteq \mathcal{S}(\h)$ let $\mathcal{D}\subseteq H(\h)$ be a semi-algebraic set that represents $\Delta(\mathcal{R})-\{0\}$.  If $m>\dim\mathcal{D}$, almost all $O\in H_{loc}(\h)^m$ are stably $\mathcal{R}$-complete.
\end{theorem}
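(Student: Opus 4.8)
The plan is to run the template of Theorem \ref{THMvN} verbatim, with the measurement manifold $H_{loc}(\h)^m$ in place of $\mathcal{M}_{1,k}^m(\mathbb{C}^n)$ and with $m$ decoupled scalar equations in place of the system \eqref{eqnblaaaa}. First I would invoke the criterion of Section \ref{method}: a tuple $O=(O^1,\dots,O^m)\in H_{loc}(\h)^m$ fails to be $\mathcal{R}$-complete precisely when the equations $\text{tr}(O^j X)=0$, $j\in\{1,\dots,m\}$, have a solution $X\in\Delta(\mathcal{R})-\{0\}$, and replacing $\Delta(\mathcal{R})-\{0\}$ by the representing set $\mathcal{D}$ makes these genuine real algebraic equations. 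To obtain stability simultaneously, I would pass at the outset to the closed, normalised representing set $\overline{\tilde{\mathcal{D}}}\subseteq SH(\h)$ of the remark after Lemma \ref{stab}, which is compact, semi-algebraic, represents $\Delta(\mathcal{R})-\{0\}$, and satisfies $\dim\overline{\tilde{\mathcal{D}}}\le\dim\mathcal{D}$, so the hypothesis $m>\dim\mathcal{D}$ still yields $m>\dim\overline{\tilde{\mathcal{D}}}$.

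Next I would set up the dimension count. Writing $O^j=O^j_1\otimes\dots\otimes O^j_k$ with $O^j_i\in SH(\mathbb{C}^{n_i})$ exhibits $H_{loc}(\h)^m$ as the image of the product of spheres $\mathcal{P}:=\prod_{j=1}^m\prod_{i=1}^k SH(\mathbb{C}^{n_i})$ under the tensor map, which is finite-to-one (the only ambiguity in a decomposition of a unit-norm product is a choice of signs with product $+1$). Hence both the dimension bookkeeping and the measure of Appendix \ref{appendixD} may be carried out on $\mathcal{P}$, where $\dim\mathcal{P}=m\sum_{i=1}^k(n_i^2-1)$. I then form $\mathcal{P}\times\overline{\tilde{\mathcal{D}}}$, impose the $m$ equations $\text{tr}(O^j X)=0$ to cut out a semi-algebraic set $\tilde{\mathcal{M}}$, and note that the projection $\pi_1$ onto $\mathcal{P}$ sends $\tilde{\mathcal{M}}$ onto a set containing every non-$\mathcal{R}$-complete parameter.

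The core step — the analogue of Lemma \ref{LEMvN} — is that imposing these $m$ equations drops $\dim(\mathcal{P}\times\overline{\tilde{\mathcal{D}}})$ by at least $m$. As in the proof of Proposition \ref{frame} the equations decouple, since the $j$-th one involves only the $j$-th block $(O^j_1,\dots,O^j_k)$; so it suffices to show that for each fixed $X\neq 0$ the single equation $g(O_1,\dots,O_k):=\text{tr}((O_1\otimes\dots\otimes O_k)X)=0$ is nontrivial on $\prod_{i=1}^k SH(\mathbb{C}^{n_i})$. This is the one genuinely new ingredient and the step I would treat most carefully. Because the hermitian matrices on each $\mathbb{C}^{n_i}$ span $H(\mathbb{C}^{n_i})$, the product observables $O_1\otimes\dots\otimes O_k$ span $H(\h)$; thus no nonzero $X$ is orthogonal to all of them, and rescaling the factors to unit norm produces a point of $\prod_i SH(\mathbb{C}^{n_i})$ where $g\neq 0$. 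Assuming each $n_i\ge 2$ (a trivial one-dimensional factor contributes only a global sign and may be dropped), $\prod_i SH(\mathbb{C}^{n_i})$ is a connected smooth manifold, so a polynomial not identically zero on it vanishes only on a subset of strictly smaller dimension; summing the $m$ decoupled contributions gives the claimed drop by $m$. Note that, in contrast to Lemma \ref{LEMvN}, there are no isometry or POVM-normalisation constraints here, so the required independence is actually simpler and the spanning property of product observables is the only real obstacle.

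Finally I would assemble the estimate and the topology. The dimension drop gives $\dim\pi_1(\tilde{\mathcal{M}})\le\dim\tilde{\mathcal{M}}\le\dim\mathcal{P}+\dim\overline{\tilde{\mathcal{D}}}-m$, which is $<\dim\mathcal{P}$ exactly because $m>\dim\overline{\tilde{\mathcal{D}}}$; hence $\pi_1(\tilde{\mathcal{M}})$ is a lower-dimensional semi-algebraic subset of $\mathcal{P}$ and has measure zero. Since $\overline{\tilde{\mathcal{D}}}$ is compact the projection $\pi_1(\tilde{\mathcal{M}})$ is closed, so its complement is open and of full measure; any $O$ in this complement has $h_O(X)\neq 0$ for all $X\in\overline{\tilde{\mathcal{D}}}$, hence is $\mathcal{R}$-complete, and being interior to the complement it is even stably $\mathcal{R}$-complete. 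Transporting this full-measure, open set forward through the finite-to-one tensor map yields the assertion for almost all $O\in H_{loc}(\h)^m$. I expect the decoupling and the normalisation bookkeeping to be routine, with the nontriviality of $g$ — the spanning of $H(\h)$ by product observables together with connectedness of the sphere product — the only point requiring genuine care.
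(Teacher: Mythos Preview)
Your proposal is correct and follows essentially the same route as the paper: work on the product of spheres $\prod_{j}\prod_i SH(\mathbb{C}^{n_i})$, use that product observables span $H(\h)$ to see each decoupled equation is nontrivial (the paper phrases this as irreducibility of $\prod_i SH(\mathbb{C}^{n_i})$ rather than connected-smoothness, which is the cleaner justification), deduce a dimension drop of $m$, project, and push forward through the tensor map. Your stability argument via closedness of $\pi_1(\tilde{\mathcal{M}})$ is a minor variant of the paper's direct appeal to Lemma~\ref{stab}, but both are valid.
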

The proof of this Theorem is given in Section \ref{proof}. 

Again, we directly obtain a Whitney type embedding result for subsets $\mathcal{R}\subseteq\mathcal{S}(\h)$ if the measurement consists of determining expectation values of local observables.

\begin{corollary}\label{whitney2}
Let $\mathcal{R}\subseteq\mathcal{S}(\h)$ be a subset. If $m>2\dim \mathcal{R}$, almost all $O\in H_{loc}(\h)^m$  are stably $\mathcal{R}$-complete. 
\end{corollary}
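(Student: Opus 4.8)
The plan is to deduce this corollary directly from the universality theorem for local observables, Theorem \ref{THMlocal}, in exact parallel to the way Corollary \ref{whitney} is obtained from Theorem \ref{THMvN}. The only thing to supply is a semi-algebraic set that represents $\Delta(\mathcal{R})-\{0\}$ and whose algebraic dimension is controlled by $\dim\mathcal{R}$; everything else is bookkeeping on the dimension inequality.

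First I would reduce to the case in which $\mathcal{R}$ is algebraic. Replacing $\mathcal{R}$ by its Zariski closure leaves $\dim\mathcal{R}$ unchanged (Proposition 2.8.2 of \cite{bochnak1998real}) and only enlarges the set of states to be discriminated, so a (stably) $\mathcal{R}$-complete tuple for the closure is \emph{a fortiori} (stably) $\mathcal{R}$-complete for $\mathcal{R}$ itself; the same neighbourhood witnessing stability for the closure witnesses it after restriction. I would then take $\mathcal{D}:=\Delta(\mathcal{R})-\{0\}$ as the representing set. By the construction in the proof of Lemma \ref{lemdim}, this set is semi-algebraic with $\dim\mathcal{D}\le 2\dim\mathcal{R}$, and it trivially represents $\Delta(\mathcal{R})-\{0\}$, being equal to it and not containing $0$.

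Next I would feed this choice into Theorem \ref{THMlocal}. Under the hypothesis $m>2\dim\mathcal{R}\ge\dim\mathcal{D}$, the dimension condition $m>\dim\mathcal{D}$ required by that theorem is satisfied, so almost all tuples $O\in H_{loc}(\h)^m$ are stably $\mathcal{R}$-complete, which is precisely the assertion.

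At the level of this corollary there is essentially no obstacle: all the substance has been absorbed into Theorem \ref{THMlocal}, whose proof is deferred to Section \ref{proof}. The genuinely hard step lives there, and it is the local counterpart of the main technical Lemma \ref{LEMvN}: one must show that imposing the vanishing equations $\text{tr}(O_iX)=0$, with each $O_i$ ranging over the product observables in $H_{loc}(\h)$, cuts the dimension of $H_{loc}(\h)^m\times\mathcal{D}$ by at least $m$. Establishing that these constraint surfaces are independent despite the rigid tensor-product structure of $H_{loc}(\h)$ — i.e.\ that for each fixed nonzero $X\in\mathcal{D}$ there is a product observable $O$ with $\text{tr}(OX)\neq 0$, so that each equation is nontrivial on its factor — is where the real effort must go, and it is the point I would expect to be delicate.
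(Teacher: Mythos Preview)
Your proposal is correct and follows essentially the same argument as the paper: reduce to algebraic $\mathcal{R}$ via the Zariski closure, take $\mathcal{D}=\Delta(\mathcal{R})-\{0\}$ with the dimension bound from Lemma~\ref{lemdim}, and apply Theorem~\ref{THMlocal}. Your additional commentary on where the real work lies (the local analogue of Lemma~\ref{LEMvN}) is accurate but not part of the proof of this corollary itself.
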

\begin{proof}
We can assume w.l.o.g. that $\mathcal{R}$ is algebraic because if not we can consider its Zariski closure. By the proof of Lemma \ref{lemdim}, $\Delta(\mathcal{R})-\{0\}$ is semi-algebraic and $\dim(\Delta(\mathcal{R})-\{0\})\leq 2\dim\mathcal{R}$. Finally, Theorem \ref{THMlocal} concludes the proof.
\end{proof}

Just like in the case of rank one POVMs also this measurement scheme applies to the problem of discriminating states of bounded rank or states of fixed spectrum.
\begin{corollary}\label{ranklocal}
If $m \ge 4r(n-r)-1$, almost all $O\in H_{loc}(\h)^m$ are stably $\mathcal{S}_r(\h)$-complete.
\end{corollary}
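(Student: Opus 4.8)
The plan is to apply Theorem \ref{THMlocal} with $\mathcal{R}=\mathcal{S}_r(\h)$, in direct analogy with the way Theorem \ref{thmvN} is obtained from Theorem \ref{THMvN}. First I would invoke Lemma \ref{set} to produce the representing set $\mathcal{D}:=\{X\in\mathcal{P}_{2r}(\h):\ \text{tr}(X)=0,\ \text{tr}(X^2)=2\}$, which is shown there to represent $\Delta(\mathcal{S}_r(\h))-\{0\}$ and to satisfy $\dim\mathcal{D}=4r(n-r)-2$ (recall that $n=\dim\h$ in the multipartite setting of this section). Since Lemma \ref{set} is stated for arbitrary $\h$, it applies verbatim to the tensor-product space at hand, so no separate construction of a representing set is needed.

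Next I would verify that the hypothesis matches the threshold of Theorem \ref{THMlocal}. Because $m$ is an integer, the assumption $m\ge 4r(n-r)-1$ is exactly the strict inequality $m>4r(n-r)-2=\dim\mathcal{D}$ required there. With this identification in place, Theorem \ref{THMlocal} applied to the set $\mathcal{D}$ yields directly that almost all $O\in H_{loc}(\h)^m$ are stably $\mathcal{S}_r(\h)$-complete, which is the claim. Note that the conclusion of Theorem \ref{THMlocal} already includes stability, so there is no need to separately invoke Lemma \ref{stab}.

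I do not anticipate any substantial obstacle: the statement is a direct corollary of Theorem \ref{THMlocal}, whose proof (deferred to Section \ref{proof}) carries all the genuine content, namely the universality of tuples of local observables together with the independence of the defining equations that underlies the dimension-counting argument. The dimension of the representing set is likewise already computed in Lemma \ref{set}. The only point requiring care is the off-by-one bookkeeping relating the integer bound $m\ge 4r(n-r)-1$ to the strict dimension inequality $m>\dim\mathcal{D}$ demanded by Theorem \ref{THMlocal}.
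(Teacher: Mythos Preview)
Your proposal is correct and follows exactly the paper's own proof: invoke Lemma \ref{set} to obtain the representing set $\mathcal{D}$ of dimension $4r(n-r)-2$, then apply Theorem \ref{THMlocal}. The only addition you make is spelling out the integer bookkeeping $m\ge 4r(n-r)-1 \Leftrightarrow m>\dim\mathcal{D}$, which the paper leaves implicit.
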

\begin{proof}
Let $\mathcal{D}$ be the quasi-algebraic set of Lemma \ref{set}. Then the result follows directly from Theorem
\ref{THMlocal}. 
\end{proof}
\begin{corollary}
Let $s$ be a spectrum on $\h$ such that the highest multiplicity of an eigenvalue in $s$ is $n-r$. If $m \ge 4r(n-r)-1$, almost all $O\in H_{loc}(\h)^m$ are stably $\mathcal{S}_s^n$-complete.
\end{corollary}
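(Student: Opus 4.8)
The plan is to reduce the statement to a one-line application of Theorem \ref{THMlocal}, exactly mirroring the reduction used for the von Neumann analogue in the preceding corollary. The universality theorem for local observables requires only a semi-algebraic set $\mathcal{D} \subseteq H(\h)$ that represents $\Delta(\mathcal{S}_s^n) - \{0\}$, together with the dimension bound $m > \dim\mathcal{D}$. So the entire task is to check that the set $\mathcal{D} := \{X \in \mathcal{P}_{2r}(\h) : \text{tr}(X) = 0,\ \text{tr}(X^2) = 2\}$ from Lemma \ref{set} does the job here as well.

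First I would verify that $\Delta(\mathcal{S}_s^n)$ is contained in the traceless part of $\mathcal{P}_{2r}(\h)$; this is the only place the multiplicity hypothesis is used. Let $\lambda$ be the eigenvalue of $s$ whose multiplicity is $n-r$. For any $\varrho \in \mathcal{S}_s^n$, the operator $\varrho - \lambda\,\id$ annihilates the $(n-r)$-dimensional $\lambda$-eigenspace and has $r$ nonzero eigenvalues coming from the remaining (necessarily distinct from $\lambda$) spectral values, hence $\text{rank}(\varrho - \lambda\,\id) \le r$. Writing a difference of two such states as $(\varrho - \lambda\,\id) - (\sigma - \lambda\,\id)$ then gives rank at most $2r$, and it is traceless since both states have unit trace. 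Thus $\Delta(\mathcal{S}_s^n) \subseteq \{X \in \mathcal{P}_{2r}(\h) : \text{tr}(X)=0\}$. This is precisely the content of Lemma IV.3 of \cite{kech} cited in the previous corollary, so at this point I would simply invoke it rather than reprove it.

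Next I would check the representation property in the sense of Section \ref{method}. Given any tuple $O \in H_{loc}(\h)^m$ and any $X \in \Delta(\mathcal{S}_s^n) - \{0\}$ with $h_O(X) = 0$, the rescaled operator $X' := \sqrt{2}\,X/\|X\|_2$ retains rank at most $2r$ and zero trace, satisfies $\text{tr}(X'^2) = 2$, and so lies in $\mathcal{D}$; by linearity $h_O(X') = 0$ as well. Since $0 \notin \mathcal{D}$, this confirms that $\mathcal{D}$ represents $\Delta(\mathcal{S}_s^n) - \{0\}$, and by Corollary \ref{corrank} (as recorded in Lemma \ref{set}) we have $\dim\mathcal{D} = 4r(n-r) - 2$.

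Finally, since the hypothesis $m \ge 4r(n-r) - 1$ is exactly $m > 4r(n-r) - 2 = \dim\mathcal{D}$, Theorem \ref{THMlocal} applies with this $\mathcal{D}$ and yields that almost all $O \in H_{loc}(\h)^m$ are stably $\mathcal{S}_s^n$-complete. I do not expect a real obstacle: the argument is a bookkeeping reduction, and the only step needing genuine care is the rank bound $\Delta(\mathcal{S}_s^n) \subseteq \mathcal{P}_{2r}(\h)$, which is already available from \cite{kech}.
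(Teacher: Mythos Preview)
Your proposal is correct and follows essentially the same route as the paper: the paper's proof simply notes that the set $\mathcal{D}$ of Lemma \ref{set} represents $\Delta(\mathcal{S}_s^n)-\{0\}$ (with the rank bound supplied by Lemma IV.3 of \cite{kech}, exactly as you cite) and then invokes the local-observable universality result. The only cosmetic difference is that the paper phrases the last step as ``follows from Corollary \ref{ranklocal}'' rather than appealing to Theorem \ref{THMlocal} directly, but since Corollary \ref{ranklocal} is itself Theorem \ref{THMlocal} applied to this same $\mathcal{D}$, the two arguments are identical in substance.
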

\begin{proof}
This follows directly from Corollary \ref{ranklocal} noting that the set of Lemma \ref{set} represents  $\Delta(\mathcal{S}_s^n)-\{0\}$.
\end{proof}

Finally, let us apply Theorem \ref{THMlocal} to local Pauli observables on qubit systems. Let $\mathcal{H}=\bigotimes_{i=1}^d\mathbb{C}^2$. The set of local Pauli observables $H_\sigma(\h)$ on  $\h$ is given by
\begin{align*}
H_\sigma (\h):=\{\sigma_1\otimes \hdots\otimes\sigma_d:\sigma_i\in SH(\mathbb{C}^{n_i})_0\}
\end{align*}
where $H(\mathbb{C}^{n_i})_0:=\{X\in H(\mathbb{C}^{n_i})_0:\text{tr}(X)=0\}$ is the real vector space of traceless hermitian $n_i\times n_i$ matrices and $SH(\mathbb{C}^{n_i})_0:=\{X\in H(\mathbb{C}^{n_i})_0:\|X\|_2=1\}$ is the unit sphere in $H(\mathbb{C}^{n_i})_0$.
\begin{corollary}
If $m \ge 4r(2^d-r)-1$, almost all $O\in H_{\sigma}(\h)^m$ are stably $\mathcal{S}_r(\h)$-complete.
\end{corollary}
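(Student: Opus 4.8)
The plan is to prove, for the restricted measurement set $H_\sigma(\h)$, a universality statement of exactly the same shape as Theorem \ref{THMlocal}, and then to feed it the representing set constructed in Lemma \ref{set}. Writing $n=2^d=\dim\h$, for $\mathcal{R}=\mathcal{S}_r(\h)$ the set $\mathcal{D}=\{X\in\mathcal{P}_{2r}(\h):\ \text{tr}(X)=0,\ \text{tr}(X^2)=2\}$ of Lemma \ref{set} represents $\Delta(\mathcal{S}_r(\h))-\{0\}$ and has $\dim\mathcal{D}=4r(2^d-r)-2$. Following the method of Section \ref{method}, I would characterise the non-injective tuples by the equations $\text{tr}(O_iX)=0$, $i\in\{1,\dots,m\}$, on the semi-algebraic set $H_\sigma(\h)^m\times\mathcal{D}$, show that imposing them lowers the dimension by at least $m$, and then conclude from $m>\dim\mathcal{D}$, i.e.\ $m\ge 4r(2^d-r)-1$, that the projection onto the first factor has dimension strictly below $\dim H_\sigma(\h)^m$ and hence measure zero.

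As in the proof of Proposition \ref{frame}, the dimension-drop claim reduces to a single copy: for fixed $X$ the $i$-th equation only involves the $i$-th tensor factor, so it suffices to show that for every fixed non-zero $X\in\mathcal{D}$ the single equation $\text{tr}(OX)=0$ lowers the dimension of $H_\sigma(\h)$ by at least one. I would carry this out on the parametrising space by pulling back along the polynomial surjection $(S^2)^d\to H_\sigma(\h)$, $(\sigma_1,\dots,\sigma_d)\mapsto\sigma_1\otimes\cdots\otimes\sigma_d$, where each factor $S^2=SH(\mathbb{C}^2)_0$ is irreducible of dimension two, so that $(S^2)^d$ is irreducible and any non-trivial polynomial equation on it strictly lowers its dimension. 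Since projections do not raise dimension, it then remains to verify that the map $(\sigma_1,\dots,\sigma_d)\mapsto\text{tr}\big((\sigma_1\otimes\cdots\otimes\sigma_d)X\big)$ is not identically zero.

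The crux --- and the genuine point of departure from the $H_{loc}(\h)$ analysis of Theorem \ref{THMlocal} --- is precisely this non-vanishing, namely that every $X\in\mathcal{D}$ is detected by at least one local Pauli observable. Expanding $X$ in the Pauli basis $\{P_1\otimes\cdots\otimes P_d:\ P_j\in\{\id,\sigma_x,\sigma_y,\sigma_z\}\}$, one has $\text{tr}\big((\sigma_1\otimes\cdots\otimes\sigma_d)X\big)=0$ for all traceless $\sigma_j$ exactly when the expansion of $X$ carries no full-weight term, i.e.\ no term in which every tensor factor is a non-identity Pauli. The whole argument therefore rests on ruling out that a non-zero element of $\Delta(\mathcal{S}_r(\h))-\{0\}$ is supported entirely on Pauli strings containing at least one identity factor; unlike $H_{loc}(\h)$, which contains operators with identity factors and hence registers those lower-weight components, the traceless-factor constraint makes $H_\sigma(\h)$ blind to exactly this subspace. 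I expect this to be the hard and potentially obstructive step: already for $d=2$ the operator $\sigma_x\otimes\id-\id\otimes\sigma_x$ is traceless, of rank two, and orthogonal to every element of $H_\sigma(\h)$, and scaling it into $\mathcal{D}$ produces a difference of two (pure) states invisible to all local Pauli measurements. Closing the gap would thus require either an additional hypothesis restricting $r$ (or the spectrum) so that such low-weight differences cannot occur, or a strictly smaller representing set that provably avoids the blind subspace; establishing that such a set still has dimension $4r(2^d-r)-2$ is where I would expect the real work, and the real danger, to lie.
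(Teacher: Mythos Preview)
Your strategy is exactly the paper's strategy: the paper's proof simply invokes the remark following Lemma \ref{LEMlocal}, which asserts that ``by going along the lines of this proof'' Lemma \ref{LEMlocal} also holds when each $SH(\mathbb{C}^{n_i})$ is replaced by its traceless version $SH(\mathbb{C}^{n_i})_0$, and then proceeds as in Corollary \ref{ranklocal} with the representing set of Lemma \ref{set}. So there is no methodological difference to discuss.

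The issue is that the gap you isolate is real, and your counterexample is correct. The paper's remark is mistaken precisely at the step you flag: the proof of Lemma \ref{LEMlocal} needs, for each non-zero $X\in\mathcal{D}$, some tuple $(O_1,\dots,O_d)$ with $\text{tr}\big((O_1\otimes\cdots\otimes O_d)X\big)\neq 0$, and this relied on the fact that $H(\h)$ is spanned by elementary tensors from $H(\mathbb{C}^{n_i})$. Once every factor is forced to be traceless, the span of $\{\sigma_1\otimes\cdots\otimes\sigma_d:\sigma_j\in SH(\mathbb{C}^{2})_0\}$ is only the full-weight Pauli sector, and any $X$ supported on lower-weight strings is invisible. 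Your example $X=\tfrac12(\sigma_x\otimes\id-\id\otimes\sigma_x)=|{+}{-}\rangle\langle{+}{-}|-|{-}{+}\rangle\langle{-}{+}|$ is a difference of two orthogonal pure states, hence lies (after normalisation) in the set $\mathcal{D}$ for $r=1$, $d=2$, and satisfies $\text{tr}(OX)=0$ for every $O\in H_\sigma(\h)$. Thus no tuple in $H_\sigma(\h)^m$, for any $m$, is $\mathcal{S}_1(\h)$-complete, and the corollary as stated is false. Your proposal therefore does not have a gap so much as it correctly diagnoses that the paper's own argument --- and in fact the claimed result --- does not go through for $H_\sigma(\h)$.
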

\begin{proof}
Theorem \ref{THMlocal} also holds for $H_{\sigma}(\h)$ \footnote{See the remark after proof of Lemma \ref{LEMlocal}.}. The remainder of the proof is then along the lines of the proof of Corollary \ref{ranklocal}.
\end{proof}

\section{Technical Results}\label{proof}

\subsection*{Proof of Lemma \ref{LEMvN}}

Before giving the proof of Lemma \ref{LEMvN} let us first explain the methods we use to compute the dimension of the relevant algebraic set.

We take advantage of the fact that the dimension of an algebraic set $V$ is given by the dimension of the tangent space  at non-singular points of $V$(see Definition 3.3.3 of \cite{bochnak1998real}). Let us make this more precise: Let $\mathbb{R}[x_1,\hdots,x_n]$ be the ring of real polynomials in $n$ variables and denote by $dp$ the differential of a real polynomial $p\in\mathbb{R}[x_1,\hdots,x_n]$, i.e. $dp(y)=\sum_{i=1}^{n}\frac{\partial p}{\partial x_i}|_ydx_i$. Let $V_I$ be the real common zero locus of a set of real polynomials $I:=\{p_1,\hdots,p_m\}\subseteq \mathbb{R}[x_1,\hdots,x_n]$. For all $x\in V_I$, 
\begin{align}\label{eqnsrank}
\sum_{i=1}^{m}\alpha_i dp_i(x)=0
\end{align}
gives a system of linear equations in $\alpha_1,\hdots,\alpha_m\in\mathbb{R}$. In the following we mainly use the following facts:
\begin{enumerate}
\item[1.]The rank of the system of linear equations \eqref{eqnsrank} at a non-singular point of $V_I$ is given by $n-d$ where $d$ is the dimension of $V_I$ \footnote{See Definition 3.3.4 and Proposition 3.3.10 of \cite{bochnak1998real}.}.
\item[2.] The non-singular points of $V_I$ are an algebraic subset of dimension less than $d$ by Proposition 3.3.14 of \cite{bochnak1998real}.
\end{enumerate}.

By computing these systems of linear equations, we prove that for a given non-zero $X\in H(\mathbb{C}^n)$, imposing the equations \eqref{eqnblaaaa} on $\Pi_{i=1}^kM(m,n,\mathbb{C})$ decreases the dimension by at least $n^2+k(m-1)$.

First, let us state a lemma which allows us to efficiently compute the systems of linear equations for the equations \eqref{eqnblaaaa}. Let $A\in M(s,m,\R)$, $C\in M(m,t,\R)$, $B\in H(\mathbb{C}^n)$. Furthermore, identify $M(m,n,\mathbb{C})$ with $\mathbb{R}^{2mn}$ via the canonical map $\iota:M(m,n,\mathbb{C})\to\mathbb{R}^{2mn},\ Y\mapsto (\text{Re}(Y),\text{Im}(Y))$ . Then the equations 
\begin{gather*}
p^I_{lo}(Y):=\text{Im}(AYBY^\dagger C)_{lo}=0,\ \ p^R_{lo}(Y):=\text{Re}(AYBY^\dagger C)_{lo}=0,\\ l\in\{0,\hdots,s\},o\in\{0,\hdots,t\},
\end{gather*}
can be considered as real algebraic equations in the variables $y^R_{jk}:=(\text{Re}(Y))_{jk}, y^I_{jk}:=(\text{Im}(Y))_{jk},\ j\in\{0,\hdots,m\},k\in\{0,\hdots,n\}$.

\begin{lemma}\label{lemlgs}
Let $Y\in M(m,n,\mathbb{C})$ be such that $AYBY^\dagger C=0$. Then, the system of linear equations
\begin{align*}
L(Y):=\sum_{l=1}^{s}\sum_{o=1}^{t}\left(\alpha^R_{lo}dp^R_{lo}(Y)+\alpha^I_{lo}dp^I_{lo}(Y)\right)=0
\end{align*}
in $\alpha^R_{lo}\in\R,\ \alpha^I_{lo}\in\R$ is equivalent to $A^TM_\alpha C^TYB+CM_\alpha^\dagger AYB=0$ where $(M_\alpha)_{lo}:=\alpha^{R}_{lo}+i\alpha^I_{lo},\ l\in\{0,\hdots,s\},o\in\{0,\hdots,t\}$.
\end{lemma}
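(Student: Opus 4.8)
The plan is to compute the real-linear functional $L(Y)$ explicitly as a trace pairing and then invoke non-degeneracy of that pairing. Since the underlying variables are $(\text{Re}(Y),\text{Im}(Y))\in\R^{2mn}$, a tangent vector is an arbitrary complex matrix $H\in M(m,n,\mathbb{C})$, and the directional derivative of the map $f(Y):=AYBY^\dagger C$ along a real variation $Y\mapsto Y+tH$ is
$$df(Y)[H]=AHBY^\dagger C+AYBH^\dagger C,$$
because under a \emph{real} variation both $Y$ and $Y^\dagger$ move. Consequently $dp^R_{lo}(Y)[H]=\text{Re}(df(Y)[H])_{lo}$ and $dp^I_{lo}(Y)[H]=\text{Im}(df(Y)[H])_{lo}$, which is all the differential information I need.

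First I would package the coefficients into $M_\alpha$. Using the elementary identity $\alpha^R\,\text{Re}(z)+\alpha^I\,\text{Im}(z)=\text{Re}\big(\overline{(\alpha^R+i\alpha^I)}\,z\big)$ entrywise and summing over $l,o$, the sum telescopes into a Frobenius pairing,
$$L(Y)[H]=\text{Re}\Big(\text{tr}\big(M_\alpha^\dagger\,df(Y)[H]\big)\Big),$$
where I have used that $\text{tr}(M_\alpha^\dagger G)=\sum_{l,o}\overline{(M_\alpha)_{lo}}\,G_{lo}$. Substituting the two summands of $df(Y)[H]$ and cycling the trace so that $H$ (respectively $H^\dagger$) sits at the right end yields $L(Y)[H]=\text{Re}\big(\text{tr}(K_1H)+\text{tr}(K_2H^\dagger)\big)$ with $K_1:=BY^\dagger CM_\alpha^\dagger A$ and $K_2:=CM_\alpha^\dagger AYB$.

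The next step is to merge the $H$- and $H^\dagger$-contributions. Writing $\overline{\text{tr}(X)}=\text{tr}(X^\dagger)$, the term $\text{tr}(K_2H^\dagger)$ contributes $\text{tr}(K_2^\dagger H)$ to the real part, so that
$$L(Y)[H]=\text{Re}\Big(\text{tr}\big((K_1+K_2^\dagger)H\big)\Big)\qquad\text{for all }H\in M(m,n,\mathbb{C}).$$
Since $H$ ranges over all complex matrices and $iH$ is also admissible, the real trace pairing is non-degenerate, so $L(Y)=0$ is equivalent to the vanishing of $Z:=K_1+K_2^\dagger$.

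It then remains to recognize $Z=0$ as the asserted equation. Using that $B$ is hermitian and $A,C$ are real, I would compute $K_2^\dagger=BY^\dagger A^TM_\alpha C^T$ and factor $Z=BY^\dagger W$ with $W:=CM_\alpha^\dagger A+A^TM_\alpha C^T$; a direct check gives $W^\dagger=W$. Because $W$ and $B$ are hermitian, $BY^\dagger W=(WYB)^\dagger$, whence $Z=0$ is equivalent to $WYB=A^TM_\alpha C^TYB+CM_\alpha^\dagger AYB=0$, exactly the claimed equivalence. I expect the only real care needed to be the bookkeeping of conjugate-transposes together with the realness of $A,C$ and hermiticity of $B,W$; the conceptual content lies entirely in the identity $L(Y)[H]=\text{Re}\,\text{tr}\big((K_1+K_2^\dagger)H\big)$ and the non-degeneracy of the trace form. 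Note that the hypothesis $AYBY^\dagger C=0$ is not used in deriving the identity itself—it merely records that $Y$ lies in the zero locus $V_I$, which is the situation in which one applies fact (1) about tangent spaces at such points.
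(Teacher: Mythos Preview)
Your proof is correct. The route differs from the paper's only in presentation: the paper works entry-by-entry with the Wirtinger-type operators $\partial_{y_{jk}}:=\partial_{y^R_{jk}}-i\partial_{y^I_{jk}}$, using $\partial_{y_{jk}}Y_{lm}=2\delta_{jl}\delta_{km}$ and $\partial_{y_{jk}}Y^*_{lm}=0$ to compute each $L_{jk}(Y)$ as a matrix entry of $(A^TM_\alpha C^TYB+CM_\alpha^\dagger AYB)^*$, whereas you package everything coordinate-free as a directional derivative and a trace pairing. Your argument is cleaner and makes transparent where the hermiticity of $B$ and the realness of $A,C$ enter (namely in converting $K_1+K_2^\dagger=0$ into $WYB=0$); the paper's index computation obscures this but requires no separate non-degeneracy step. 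Both proofs observe, as you note, that the hypothesis $AYBY^\dagger C=0$ is not actually needed for the identity---it only situates $Y$ in the zero locus where the tangent-space interpretation applies.
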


\begin{proof}
Let
\begin{align*}
L_{jk}=(\partial_{y^R_{jk}}-i\partial_{y^I_{jk}}) \sum_{l=1}^s\sum_{o=1}^t \left(\alpha_{lo}^R\ p ^R_{lo}+\alpha_{lo}^I\ p^I_{lo}\right),\ j\in\{1,\hdots,m\},k\in\{1,\hdots,n\}.
\end{align*}
Then the system of linear equations $\{L_{jk}(Y)=0\}_{j\in\{1,\hdots,m\},k\in\{1,\hdots,n\}}$ is equivalent to $L(Y)=0$ since
\begin{small}
\begin{align*}
L&=\sum_{l=1}^s\sum_{o=1}^t \left(\alpha_{lo}^R\sum_{j=1}^m\sum_{k=1}^n\left((\partial_{y^R_{jk}}p^R_{lo})dy^R_{jk}
   +(\partial_{y^I_{jk}}p^R_{lo})dy^I_{jk}\right)+\alpha_{lo}^I\sum_{j=1}^m\sum_{k=1}^n
   \left((\partial_{y^R_{jk}}p^I_{lo})dy^R_{jk}+(\partial_{y^I_{jk}}p^I_{lo})dy^I_{jk}\right)\right)\\
&=\sum_{j=1}^m\sum_{k=1}^n\left(\left(\partial_{y^R_{jk}}\sum_{l=1}^s\sum_{o=1}^t   \left(\alpha_{lo}^R\ p^R_{lo}+\alpha_{lo}^I\ p^I_{lo}\right)\right)dy^R_{jk}+\left(\partial_{y^I_{jk}}\sum_{l=1}^s\sum_{o=1}^t \left(\alpha_{lo}^R\ p^R_{lo}+\alpha_{lo}^I\ p^I_{lo}\right)\right)dy^I_{jk}\right)\\
  &=\sum_{j=1}^m\sum_{k=1}^n\left(\text{Re}(L_{jk})dy_{jk}^R-\text{Im}(L_{jk})dy_{jk}^I\right).
\end{align*}
\end{small}

Let $\partial_{y_{jk}}=\partial_{y_{jk}^R}-i\partial_{y_{jk}^I}$ and note that $\partial_{y_{jk}}Y_{lm}=2\delta_{jl}\delta_{km}$,  $\partial_{y_{jk}}Y^*_{lm}=0$. Then,
\begin{small}
\begin{align*}
L_{jk}(Y)&=(\partial_{y_{jk}^R}-i\partial_{y_{jk}^I}) \sum_{l=1}^s\sum_{o=1}^t \left( \alpha_{lo}^R \frac{1}{2}(AYBY^\dagger C+AY^*B^*Y^T C)_{lo}+\alpha_{lo}^I\frac{1}{2i}(AYBY^\dagger C-AY^*B^*Y^T C)_{lo}\right)\\
&=\frac{1}{2}\partial_{y_{jk}}\sum_{l=1}^s\sum_{o=1}^t \left((M_\alpha^*)_{lo} (AYBY^\dagger C)_{lo}+(M_\alpha)_{lo}(AY^*B^*Y^T C)_{lo}\right)\\
&=\sum_{l=1}^s\sum_{o=1}^t\sum_{p=1}^m\sum_{q=1}^n\left((M_\alpha^*)_{lo}A_{lp}\delta_{pj}\delta_{qk}(BY^\dagger C)_{qo}+(M_\alpha)_{lo}(AY^*B^*)_{lq}\delta_{qk}\delta_{pj}C_{po}\right)\\
&=(A^TM_\alpha^*C^TY^*B^T+CM_\alpha^T AY^*B^*)_{jk}\\
&=(A^TM_\alpha C^TYB+CM_\alpha^\dagger AYB)^*_{jk}.
\end{align*}
\end{small}
Hence $L(Y)=0$ is equivalent to
$A^TM_\alpha C^TYB+CM_\alpha^\dagger AYB=0$.
\end{proof}
Under the identification $M(m,n,\mathbb{C})\simeq \R^{2mn}$ given by the map $\iota$ defined above, also the equations 
\begin{gather*}
r^R_{lo}(Y):=\text{Re}(Y^{\dagger}Y)_{lo}-\delta_{lo}=0,\ r^I_{lo}(Y):=\text{Im}(Y^{\dagger}Y)_{lo}=0,\\ l,o\in\{1,\hdots,n\},
\end{gather*}
can be considered as real algebraic equations in the variables $y^R_{jk}:=(\text{Re}(Y))_{jk}, y^I_{jk}:=(\text{Im}(Y))_{jk},\ j\in\{0,\hdots,m\},k\in\{0,\hdots,n\}$.
\begin{corollary}\label{corlgs}
Let $Y\in M(m,n,\mathbb{C})$ be such that $Y^{\dagger}Y-\id_n=0$. Then, the system of linear equations
\begin{align*}
L(Y):=\sum_{l,o=1}^{n}\left(\gamma^R_{lo}dr^R_{lo}(Y)+\gamma^I_{lo}dr^I_{lo}(Y)\right)=0
\end{align*}
in $\gamma^R_{lo}\in\R,\ \gamma^I_{lo}\in\R$ is equivalent to $Y(M_\gamma+M_\gamma^\dagger)=0$ where $(M_\gamma)_{lo}:=\gamma^{R}_{lo}+i\gamma^I_{lo},\ l,o\in\{1,\cdots,n\}$.
\end{corollary}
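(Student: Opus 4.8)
The plan is to repeat the proof of Lemma \ref{lemlgs} almost verbatim, the only change being that the quadratic expression $AYBY^\dagger C$ is replaced by $Y^\dagger Y$. Since $r^R_{lo}$ and $r^I_{lo}$ are again the real and imaginary parts of an expression that is polynomial in the entries of $Y$ and of $\overline{Y}$, the same Wirtinger-derivative machinery applies.

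First I would carry over, unchanged, the reduction of the single covector identity $L(Y)=0$ to the system $\{L_{jk}(Y)=0\}_{j,k}$, where $L_{jk}:=(\partial_{y^R_{jk}}-i\partial_{y^I_{jk}})\sum_{l,o}(\gamma^R_{lo}r^R_{lo}+\gamma^I_{lo}r^I_{lo})$. Exactly as in Lemma \ref{lemlgs}, the chain rule gives $L=\sum_{j,k}(\text{Re}(L_{jk})\,dy^R_{jk}-\text{Im}(L_{jk})\,dy^I_{jk})$, and since the $dy^R_{jk},dy^I_{jk}$ are independent covectors, $L(Y)=0$ holds iff $L_{jk}(Y)=0$ for all $j,k$. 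This step uses nothing specific about the expression; the hypothesis $Y^\dagger Y-\id_n=0$ is only needed to guarantee that $Y$ sits on the constraint variety, which is the situation in which these differentials are applied when computing its dimension.

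The core computation is then the evaluation of $L_{jk}(Y)$. Writing $r^R_{lo}=\tfrac12\big((Y^\dagger Y)_{lo}+(Y^TY^*)_{lo}\big)-\delta_{lo}$ and $r^I_{lo}=\tfrac1{2i}\big((Y^\dagger Y)_{lo}-(Y^TY^*)_{lo}\big)$ and collecting coefficients, the combination equals $\tfrac12\sum_{l,o}\big((M_\gamma^*)_{lo}(Y^\dagger Y)_{lo}+(M_\gamma)_{lo}(Y^TY^*)_{lo}\big)$ up to the constant $-\sum_l\gamma^R_{ll}$, which is annihilated by $\partial_{y_{jk}}$. I would then apply $\partial_{y_{jk}}=\partial_{y^R_{jk}}-i\partial_{y^I_{jk}}$ using $\partial_{y_{jk}}Y_{lm}=2\delta_{jl}\delta_{km}$ and $\partial_{y_{jk}}Y^*_{lm}=0$, so that the derivative of $(Y^\dagger Y)_{lo}$ contributes $2Y^*_{jl}\delta_{ko}$ and that of $(Y^TY^*)_{lo}$ contributes $2\delta_{kl}Y^*_{jo}$. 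Summing the first against $M_\gamma^*$ and the second against $M_\gamma$ yields $L_{jk}(Y)=(\overline{YM_\gamma})_{jk}+(\overline{YM_\gamma^\dagger})_{jk}=(\overline{Y(M_\gamma+M_\gamma^\dagger)})_{jk}$.

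Finally, since a complex matrix vanishes iff its entrywise conjugate does, $\{L_{jk}(Y)=0\}_{j,k}$ is equivalent to $Y(M_\gamma+M_\gamma^\dagger)=0$, as claimed. I expect the only genuinely error-prone point to be the index bookkeeping in the core computation: ensuring that the $(Y^\dagger Y)$-term assembles into $YM_\gamma$ and the $(Y^TY^*)$-term into $YM_\gamma^\dagger$ — rather than into transposed versions — by tracking carefully which index of each factor is contracted with $M_\gamma$. Everything else is dictated by the template of Lemma \ref{lemlgs}.
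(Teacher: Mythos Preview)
Your proposal is correct and follows essentially the same approach as the paper's proof: both reduce $L(Y)=0$ to the componentwise system $\{L_{jk}(Y)=0\}$ via the Wirtinger operator $\partial_{y_{jk}}=\partial_{y^R_{jk}}-i\partial_{y^I_{jk}}$, rewrite $r^R_{lo},r^I_{lo}$ as $\tfrac12((Y^\dagger Y)_{lo}\pm(Y^TY^*)_{lo})$ modulo constants, and obtain $L_{jk}(Y)=(Y^*M_\gamma^*+Y^*M_\gamma^T)_{jk}=(Y(M_\gamma+M_\gamma^\dagger))^*_{jk}$. Your index bookkeeping is accurate and matches the paper's computation line by line.
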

\begin{proof}
The proof of this result can be obtained by going along the lines of the proof of Lemma \ref{lemlgs}, so we just give the calculation that differs: $L(Y)=0$ is equivalent to $\{L_{jk}(Y)=0\}_{j\in\{1,\cdots,m\},k\in\{1,\cdots,n\}}$ where
\begin{small}
\begin{align*}
L_{jk}(Y)&=(\partial_{y_{jk}^R}-i\partial_{y_{jk}^I}) \sum_{l,o=1}^n \left( \gamma_{lo}^R \frac{1}{2}(Y^{\dagger}Y+Y^{T}Y^*)_{lo}+\gamma_{lo}^I\frac{1}{2i}(Y^{\dagger}Y-Y^{T}Y^*)_{lo}\right)\\
&=\frac{1}{2}\partial_{y_{jk}}\sum_{l,o=1}^n \left((M_\gamma^*)_{lo} (Y^{\dagger}Y)_{lo}+(M_\gamma)_{lo}(Y^{T}Y^*)_{lo}\right)\\
&=\sum_{l,o=1}^n\sum_{p=1}^m\left((M_\gamma^*)_{lo}\delta_{ko}\delta_{jp}(Y^\dagger)_{lp}+(M_\gamma)_{lo}(Y^*)_{po}\delta_{lk}\delta_{pj}\right)\\
&=(Y^*M_\gamma^*+Y^*M_\gamma^T)_{jk}\\
&=(YM_\gamma+YM_\gamma^\dagger)^*_{jk}.
\end{align*}
\end{small}
Hence $L(Y)=0$ is equivalent to $Y(M_\gamma+M_\gamma^\dagger)=0$. 
\end{proof}
\begin{remark}
Note that combining the equations of Lemma \ref{lemlgs} and Corollary \ref{corlgs} yields the system of linear equations $Y(M_\gamma+M_\gamma^\dagger)+A^TM_\alpha C^TYB+CM_\alpha^\dagger AYB=0$ (see equations \ref{eqnsrank}).
\end{remark}

Let us now give the proof of Lemma \ref{LEMvN}.
\begin{proof}
For a given non-zero $X\in H(\mathbb{C}^n)$ and $i\in\{1,\hdots,k\}$, consider the following equations in $(M_1,\dots,M_k)\in \prod_{i=1}^kM(m,n,\mathbb{C})$:
\begin{align*}
p_i^j(M_1,\hdots,M_k):=\text{tr}(M_i^\dagger e_j e_j^\dagger M_iX)=e_j^\dagger M_iXM_i^\dagger e_j&=0,\ \ j\in\{1,\hdots,m\},
\end{align*}
and
\begin{align*}
q_i^{jl}(M_1,\hdots,M_k):=(M_{i}^\dagger M_{i})_{jl}-\delta_{jl}=0, \ \ j,l\in\{1,\hdots,n\}.
\end{align*}
Under the canonical identification $\prod_{i=1}^kM(m,n,\mathbb{C})\simeq \R^{2knm}$, these equations can be regarded as real algebraic equations in $2knm$ variables. Let $I_i:=\{p_i^j\}_{j\in\{1,\hdots,m\}}$ and $J_i:=\{q_i^{jl}\}_{j,l\in\{1,\hdots,n\}}$.

We have to show that the dimension of the real common zero locus of the equations $K_k=\bigcup_{i=1}^k I_i\cup J_i $ is at most $2kmn-kn^2-k(m-1)$. Denote by $\iota_1: M(m,n,\mathbb{C})\to \Pi_{i=1}^k M(m,n,\mathbb{C}),\ M\mapsto (M,0,\hdots)$ the inclusion in the first factor and let $\pi_i:\Pi_{i=1}^k M(m,n,\mathbb{C})\to M(m,n,\mathbb{C}) ,\ (M_1,\hdots,M_i,\hdots,M_k)\mapsto M_i$ be the projection on the $i$-th factor. Then we find $I_i\cup J_i=(J_1\cup I_1)\circ \iota_1\circ \pi_i$, where $(J_1\cup I_1)\circ \iota_1\circ \pi_i:=\{p\circ \iota_1\circ \pi_i:p\in J_1\cup I_1\}$. Thus, we conclude that $V_{K_k}\simeq \prod_{i=1}^k V_{K_1}$ and it suffices to reduce to $k=1$. We stick to the notation introduced in the beginning of this section and denote the algebraic set obtained from $M(m,n,\mathbb{C})$ by imposing the equations $I:=I_1$ and $J:=J_1$ by $V_{I\cup J}$.

Let us now determine the system of linear equations $L$ associated to $I\cup J$ at $U\in V_{I\cup J}$. The contribution of the $j$-th equation of $I$ to $L$ is obtained from Lemma \ref{lemlgs} by choosing $A=e_j^\dagger$, $B=X$, $C=e_j$, $Y=U$ and thus the contribution of $I$ is given by
\begin{align*}
\sum_{j=1}^m\alpha_j^R e_j e_j^\dagger UX,\ \alpha_j^R\in\R.
\end{align*}
Similarly, by Corollary \ref{corlgs}, the contribution of $J$ to $L$ is given by,
\begin{align*}
U(M_\gamma+M_\gamma^\dagger)
\end{align*}
where $(M_\gamma)_{jk}:=\gamma_{jk}^R+i\gamma_{jk}^I, i,j\in\{1,\hdots n\}, \gamma_{jk}^R,\gamma_{jk}^I\in\R$. Note that this just gives conditions on the hermitian part of $M_\gamma$ and define $\Gamma\in H(\mathbb{C}^n)$ by $\Gamma:=M_\gamma+M_\gamma^\dagger$.

Combining these two parts, the system of linear equations associated to the equations $I\cup J$ at $U\in V_{I\cup J}$ is equivalent to the following system of linear equations in $\alpha_1,\hdots,\alpha_m\in\R$ and $\gamma_{kj}^R\in \R, \gamma_{kj}^I\in \R,\ k,j\in\{1,\hdots,n\},$ 
\begin{align}\label{lgs1}
U\Gamma  + D_\alpha UX=0
\end{align}
where $D_{\alpha}=\sum_{j=1}^m\alpha_je_j e_j^\dagger$. Observing that $\Gamma$ is uniquely determined by the equations \eqref{lgs1}, the rank of \eqref{lgs1} is at least $n^2$ and we can reduce to the anti-hermitian part of \eqref{lgs1} to find the remaining $m-1$ independent equations:
\begin{align}\label{lgs}
0=U\Gamma U^\dagger + D_\alpha U XU^\dagger  -\left(U\Gamma U^\dagger +D_\alpha U XU^\dagger \right)^\dagger=-[U XU^\dagger,D_\alpha].
\end{align}

{ Next we study the commutator $[U XU^\dagger,D_\alpha]$ in detail. As $X$ is an arbitrary hermitian matrix, we have to carefully consider all possible combinations of eigenspaces, or more precisely eigenspace degeneracies, $X$ could have.

In order to achieve this, let us begin with the following example, which will be the starting point for the decomposition of $X$:} Let $M$ be a subset of $\{1,\hdots,m\}$ and define the diagonal projection $D_M\in M(m,\mathbb{R})$ by $e_i^\dagger D_Me_j:=\delta_{i,j}\delta_{j,M}$, where $\delta_{j,M}=1\text{ for }j\in M$ and $0$ else. The following observation is the crucial idea for the remainder of the proof: If $[U XU^\dagger,D_M]\neq 0$ for all proper subsets $M$ of $\{1,\hdots,m\}$ then $m-1$ of the operators $\{[U XU^\dagger,D_{\{i\}}]\}_{i\in\{1,\hdots,m\}}$ are linearly independent. To show this, assume that there are $a_j\in \R,\ j\in\{1,\hdots,m\}$, with $a_k\neq a_l$ for some $k,l$ such that $\sum_{j=1}^m a_j[U XU^\dagger,D_{\{j\}}]=0$. Since the commutativity of hermitian matrices is determined solely by their eigenspaces, we deduce $[U XU^\dagger,D_{E}]=0$, where $E:=\{j\in\{1,\hdots,m\}:a_j=a_k\}$. But this is a contradiction since $E$ is a proper subset of $\{1,\cdots,m\}$. Hence, the only solution is $a_1=a_2=\cdots=a_m$ and this proves the claim.
Thus, in this case we conclude  that the solution of the system of linear equations \eqref{lgs} is given by  $\alpha_1=\hdots=\alpha_m$ and hence there are $m-1$ linearly independent equations. 

Next, we decompose $V_{I\cup J}$ into quasi-algebraic subsets for which the argument we just gave can be applied\footnote{{By means of this decomposition we can separately consider all possible eigenspace degeneracies of $X$.}}. Let $P[m]$ be the set of partitions of $\{1,\hdots,m\}$. We say that a subset $S\subseteq\{1,\hdots,m\}$ is subordinate to a partition $P\in P[m]$ if there is $M\in P$ such that $S$ is a proper subset of $M$. For given $P\in P[m]$, define the quasi-algebraic set $W_P$ to be the set of $U\in V_{I\cup J}$ such that
\begin{align}\label{eqcommu}
[D_M,U XU^\dagger ]=0,\ \forall M\in P,
\end{align}
and
\begin{align*}
[D_N,U XU^\dagger ]\neq 0,\ \forall N\subseteq\{1,\hdots,m\}\ \text{subordiante to }P.
\end{align*}
The set $V_{I\cup J}$ can clearly be decomposed into the sets $W_P$:
\begin{align*}
V_{I\cup J}=\bigcup_{P\in P[m]}W_P.
\end{align*}

Having already checked that $2mn-\dim W_P= m-1+n^2$ if $P$ is the trivial partition, we conclude the proof by showing that $2mn-\dim W_P\ge m-1+n^2$ for all non-trivial $P\in P[m]$ \footnote{Note that, depending on the choice of $X$, many of the $W_P$ might be empty. If $X=\id_n$, $n=m$ for instance, all $W_P$ would be empty.}. In order to prove this, we first show that the rank of the system of linear equations associated to $W_P$ is at least $n^2+m-1$ for all points in $W_P$.

Let $P=\{M_1,\hdots,M_l,M_{l+1}\}\in P[m]$ be an arbitrary non-trivial partition. Choosing $A=D_{M_j}$, $B=X$, $C=\text{id}_m$ and $Y=U$ in Lemma \ref{lemlgs} yields
\begin{align*}
D_{M_j}M_{\beta_j}UX +M_{\beta_j}^\dagger D_{M_j}UX,
\end{align*}
where $M_{\beta_j}\in M(m,\mathbb{C})$ with $(M_{\beta_j})_{lo}:=\beta^R_{j;lo}+i\beta^I_{j;lo},\ l,o\in\{1,\hdots,m\},\ \beta^R_{j;lo},\beta^I_{j;lo}\in\R$
and similarly with the roles of $A$ and $C$ exchanged. Thus, equation \eqref{eqcommu} for $M_j$ gives the following contribution to the system of linear equations associated to $W_P$ at $U\in W_P$:
\begin{align*}
[D_{M_j},M_{\beta_j}-M_{\beta_j}^\dagger]UX.
\end{align*}

Thus, the system of linear equations associated to $W_P$ at $U\in W_P$ is equivalent to the following system of linear equations in $\alpha_1,\hdots,\alpha_m\in\R$, $\gamma_{kj}^R\in \R, \gamma_{kj}^I\in \R,\ k,j\in\{1,\hdots,n\},$ and $\beta^R_{j;lo}\in\R,\beta^I_{j;lo}\in\R,\ j\in\{1,\hdots,l+1\},l,o\in\{1,\hdots,m\}$:
\begin{align*}
U\Gamma  + D_\alpha UX+\sum_{k=1}^{l+1}[D_{M_k},M_{\beta_k}-M_{\beta_k}^\dagger]UX=0.
\end{align*}
Again, we can eliminate $\Gamma$ by reducing to the anti-hermitian part to obtain
\begin{small}
\begin{align} \label{lgs2}
U\Gamma U^\dagger+D_{\alpha} U XU^\dagger +&\sum_{k=1}^{l+1}[D_{M_k},M^H_{\beta_k}]UXU^\dagger-\left(U\Gamma U^\dagger+D_{\alpha} U XU^\dagger +\sum_{k=1}^{l+1}[D_{M_k},M_{\beta_k}^H]UXU^\dagger\right)^\dagger \nonumber\\
&\Leftrightarrow [U XU^\dagger ,D_{\alpha}]+\sum_{k=1}^{l+1}[U XU^\dagger ,[M_{\beta_k}^H,D_{M_k}]]=0, 
\end{align}
\end{small}
where $M_{\beta_j}^H$ is the anti-hermitian $m\times m$ matrix defined by $M_{\beta_j}^H:=M_{\beta_j}-M_{\beta_j}^\dagger$. 

Conjugating with $D_{M_j}$ yields
\begin{align*}
[UXU^\dagger,D_{M_j}D_{\alpha}]=0,
\end{align*}
where we used $[UXU^\dagger,D_{M_j}]=0$ together with $D_{M_j}[M_{\beta_k}^H,D_{M_k}]D_{M_j}=D_{M_j}M_{\beta_j}^HD_{M_j}-D_{M_j}M_{\beta_j}^HD_{M_j}=0$. By construction of $W_P$, we have $[UXU^\dagger,D_{M_j}D_M]\neq 0$ for all proper subsets $M\subseteq M_j$. Since the commutativity of hermitian matrices is solely determined by their eigenspaces  we conclude just like in the case of the trivial partition that $D_{M_j}D_{\alpha}\propto D_{M_j}$ for all $j\in\{1,\hdots,l+1\}$ \footnote{In particular, note that if $D_\alpha$ solves the system of linear equations \eqref{lgs2} we have $[U XU^\dagger,D_{\alpha}]=0$.}. Thus, if there is $U\in W_P$, the rank of \eqref{lgs2} at $U$ is at least $n^2+m-l-1$. 

To find the remaining $l$ independent equations consider the remaining equations
\begin{align*}
\sum_{j=1}^{l+1}[U XU^\dagger,[M^H_{\beta_j},D_{M_j}]]=0.
\end{align*}

There is $i\in\{1,\hdots,l+1\}$ with $D_{M_i}U XU^\dagger\neq 0$ because otherwise we would conclude that $U XU^\dagger= 0$ which is a contradiction since $U\in U(m,n)$ and $X\neq 0$ by assumption. Multiplying by $D_{M_k}$ from the left and $D_{M_i}$ from the right yields
\begin{align*}
&\sum_{j=1}^{l}D_{M_k}[U XU^\dagger,[M^H_{\beta_j}, D_{M_j}]]D_{M_i}=0\\
\Leftrightarrow&\sum_{j=1}^{l}[U XU^\dagger ,D_{M_k}[M^H_{\beta_j}, D_{M_j}]]D_{M_i}=0\\
\Leftrightarrow&[U XU^\dagger ,D_{M_k}(M^H_{\beta_{i}}-M^H_{\beta_k}) D_{M_i}]=0.
\end{align*}
For each $k\in\{1,\hdots,l+1\}-\{i\}$ this gives at least one equation on $M^H_{\beta_k}$: First, assume $|M_{i}|=1$. Then there is $q\in\{1,\dots,m\}$ such that $M_i=\{q\}$. Furthermore, since 
\begin{align*}
0\neq D_{M_i}U XU^\dagger=D_{M_i}U XU^\dagger D_{M_i}=(e_q^dagger U XU^\dagger q)qq^\dagger,
\end{align*}
we conclude that $e_q^\dagger U XU^\dagger q\neq 0$. But this is a contradiction to the $q$-th equation of $I$.  

Hence we can assume $|M_{i}|\geq 2$. By construction of $W_P$ there is an eigenvector $v_k\neq 0$ of $U XU^\dagger$ in the range of $D_{M_k}$ with eigenvalue $\lambda_k$ and a eigenvector $v_{i}\neq 0$ of $U XU^\dagger$ in the range of $D_{M_{i}}$ with eigenvalue $\lambda_{i}$. Since we assumed $|M_{i}|\geq 2$, by construction of $W_P$, $U XU^\dagger$ has at least two eigenvectors in the range of $D_{M_{i}}$ with different eigenvalues because otherwise there would be a proper subset of $N\subseteq M_i$ such that $[UXU^\dagger,D_N]=0$. Thus, we can choose $\lambda_{i}$ such that $\lambda_{i}\neq \lambda_k$. We then find
\begin{align*}
& v_k^\dagger[U XU^\dagger ,D_{M_k}(M^H_{\beta_{i}}-M^H_{\beta_k}) D_{M_i}]v_i=0\\
\Leftrightarrow& v_k^\dagger(M^H_{\beta_{i}}-M^H_{\beta_k})v_i(\lambda_k-\lambda_i)=0\\
\Leftrightarrow& v_k^\dagger M^H_{\beta_{i}}v_i- v_k^\dagger M^H_{\beta_k}v_i=0.
\end{align*}
But this clearly gives a non-trivial condition on $M^H_{\beta_k}$ since $M^H_{\beta_k}=M_{\beta_j}-M_{\beta_j}^\dagger$. Thus we conclude that, if there is $U\in W_P$, the rank of \eqref{lgs2} at $U$ is at least $m-l-1+l=m-1$ and hence the rank of the system of linear equations associated to $W_P$ at $U$ is at least $n^2+m-1$. But if $W_P$ is non-empty, it does contain a non singular-point by Proposition 3.3.14 of \cite{bochnak1998real}. And thus the rank of the system of linear equations associated to $W_P$ at this non-singular point is at least $n^2+m-1$. Hence $2kmn-\dim W_P\ge n^2+m-1$ by Proposition 3.3.10 of \cite{bochnak1998real}.

\end{proof}
\subsection*{Proof of Theorem \ref{THMvN}}\label{bla}
\begin{proof}
Let $\psi$ be the map defined in \eqref{psi}. We can assume that $\mathcal{D}$ is a closed subset of $SH(\mathbb{C}^n)$ because if not we can replace it by the closure of $\psi(\mathcal{D})$ without increasing its dimension \footnote{See remark after Lemma \ref{stab} for more details.}. Let $\tilde{\mathcal{M}}:=\{(U_1,\hdots,U_k,X)\in\prod_{i=1}^kU(m,n)\times\mathcal{D}:e_j^\dagger U_iXU_i^\dagger e_j=0,\ j\in\{1,\dots,m\},i\in\{1,\dots,k\}\}$.

First, we fix the measure on $\mathcal{M}_{1,k}^m(\mathbb{C}^n)$: Let  $\phi$ be the map defined in equation \eqref{phi}. We define the measure $\mu$ on $\mathcal{M}_{1,k}^m(\mathbb{C}^{n})$ to be the pushforward measure of the $2knm$-dimensional Hausdroff measure $\mu_H$ on $\prod_{i=1}^kU(m,n)\subseteq \R^{2nmk}$, i.e. $\mu(A):=\phi_*(\mu_{H})(A)=\mu_H(\phi^{-1}(A))$ for $A\subseteq\mathcal{M}_{\text{vN}}^m(\mathbb{C}^{n})$ a measurable set.

Note that $\phi$ is the quotient projection with respect to the left action of the toral goup $T:=\prod_{i=1}^kT(m),\ T(m):=\{\text{diag}(\lambda_1,\hdots,\lambda_m):\lambda_i\in U(1)\}$ on $\prod_{i=1}^kU(m,n)$ given by $((U_1,\hdots,U_k),(T_1,\hdots,T_k))\mapsto((U_1,\hdots,U_k),(T_1U_1,\hdots,T_kU_k))$. Also note that the equations \eqref{eqnblaaaa} are invariant under the action of $T$ and hence $T\pi_1(\tilde{\mathcal{M}})=\pi_1(\tilde{\mathcal{M}})$ where $\pi_1:\prod_{i=1}^kU(m,n)\times \mathcal{D}\to \prod_{i=1}^k U(m,n)$ is the projection on the first factor. Thus, for $\mu_H(\pi_1(\tilde{\mathcal{M}}))=0$, we find
\begin{align*}
 \mu\left(\phi\circ\pi_1(\tilde{\mathcal{M}})\right)&=\mu_H\left(\phi^{-1}\left(\phi\circ\pi_1(\tilde{\mathcal{M}})\right)\right)
 \\&=\mu_H\left(T\pi_1(\tilde{\mathcal{M}})\right)\\
 &=\mu_H\left(\pi_1(\tilde{\mathcal{M}})\right)=0.
\end{align*}
Hence, it suffices to prove that $\mu_H(\pi_1(\tilde{\mathcal{M}}))=0$.

Finally, for $k(m-1)>\dim\mathcal{D}$ we find $\dim \pi_1(\tilde{\mathcal{M}})\leq\dim \prod_{i=1}^kU(m,n)+\dim\mathcal{D}-m(k-1)<\dim \prod_{i=1}^kU(m,n)$ by Lemma \ref{LEMvN}. So $\pi_1(\tilde{\mathcal{M}})$ has $\mu_H$-measure zero in $\prod_{i=1}^kU(m,n)$. The stability follows directly from Lemma \ref{stab}.
\end{proof}
\begin{remark}
Note that by the remark after Lemma \ref{LEMvN}, this proof just depends on $\mathcal{D}\subseteq H(\mathbb{C}^n)$ and hence naturally extends to semi-algebraic subsets $\mathcal{R}\subseteq H(\mathbb{C}^n)$. Furthermore, this proof shows that indeed $\pi_1(\tilde{\mathcal{M}})$ has $\mu_H$-measure zero in $\prod_{i=1}^kU(m,n)$. Thus the statement of Theorem \ref{THMvN} naturally also holds for tight frames $U\in U(n,m)$.
\end{remark}
\subsection*{Proof of Theorem \ref{THMlocal}}
For a given non-zero $X\in H(\h)$, consider the equations
\begin{align}\label{eqnblaaaaa}
p^i((O^1_1,\dots,O^1_k),\dots,(O^m_1,&\dots,O^m_k)):=\text{tr}((O^i_1\otimes\dots\otimes O^i_k)X)=0,\ i\in\{1,\hdots,m\},
\end{align}
in $((O^1_1,\dots,O^1_k),\dots,(O^m_1,\dots,O^m_k))\in (\Pi_{i=1}^kH(\mathbb{C}^{n_i}))^m$. Under the identification 
$H(\mathbb{C}^{n_i})\simeq \R^{n_i^2}$, these equations can be considered as real algebraic equations in the variables $((O^1_1,\dots,O^1_k),\dots,(O^m_1,\dots,O^m_k))$.
The following Lemma is the analogue of Lemma \ref{LEMvN}.
\begin{lemma}\label{LEMlocal}
Let $X\in H(\h)$ be non-zero. Imposing the equations \eqref{eqnblaaaaa} on $(\Pi_{i=1}^kSH(\mathbb{C}^{n_i}))^m$ decreases the dimension by at least $m$.
\end{lemma}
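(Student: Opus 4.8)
The plan is to mimic the proof of Proposition~\ref{frame}, exploiting the fact that the $i$-th equation of \eqref{eqnblaaaaa} constrains only the $i$-th observable $(O^i_1,\dots,O^i_k)$. Set $W:=\Pi_{j=1}^k SH(\mathbb{C}^{n_j})$, so that the domain is $W^m$ and the common zero locus of \eqref{eqnblaaaaa} is $Z^m$, where $Z:=\{(O_1,\dots,O_k)\in W:\ \text{tr}((O_1\otimes\dots\otimes O_k)X)=0\}$. Since the dimension of a product of semi-algebraic sets is the sum of the dimensions \cite{bochnak1998real}, one has $\dim W^m-\dim Z^m=m(\dim W-\dim Z)$. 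Hence it suffices to show that the single equation $p(O_1,\dots,O_k):=\text{tr}((O_1\otimes\dots\otimes O_k)X)=0$ decreases $\dim W$ by at least one, for then the total decrease is at least $m$.

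First I would check that $p$ does not vanish identically on $W$. The function $p$ is multilinear in $(O_1,\dots,O_k)$, so if it vanished on the product of unit spheres $W$ then, rescaling each argument to the sphere, it would vanish on all of $\Pi_{j=1}^k H(\mathbb{C}^{n_j})$. But the operators $O_1\otimes\dots\otimes O_k$ with $O_j\in H(\mathbb{C}^{n_j})$ span $H(\h)$, whence $\text{tr}(YX)=0$ for every $Y\in H(\h)$ and therefore $X=0$, contradicting the hypothesis $X\neq 0$. Thus $p$ is a non-trivial polynomial on $W$.

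It then remains to argue that this non-trivial polynomial genuinely lowers the dimension of $W$. After discarding the trivial legs with $n_j=1$ (for which $SH(\mathbb{C}^{n_j})=\{\pm 1\}$ has dimension zero and the corresponding tensor factor is merely a sign, leaving the non-triviality of $p$ intact), each factor $SH(\mathbb{C}^{n_j})\cong S^{n_j^2-1}$ is a sphere of positive dimension, hence an irreducible real algebraic set, and so is the product $W$. On an irreducible algebraic set a proper algebraic subset has strictly smaller dimension, giving $\dim Z\le\dim W-1$; multiplying up over the $m$ factors yields the claim.

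The main obstacle is precisely this last dimension drop: in Proposition~\ref{frame} the ambient set was the manifestly irreducible affine space $\mathbb{C}^n\cong\R^{2n}$, whereas here one must rule out that $\{p=0\}$ meets $W$ in a component of full dimension. I would secure the irreducibility of $W$ by complexification, namely the complexification of $S^d$ for $d\ge 1$ is the irreducible smooth quadric $\{\sum_i z_i^2=1\}$, products of geometrically irreducible varieties are geometrically irreducible, and a real algebraic set with irreducible complexification is irreducible. The remaining ingredients, the multilinearity of the trace, the spanning property of tensor products of hermitian matrices, and additivity of dimension under products, are routine.
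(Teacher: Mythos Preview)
Your argument is correct and follows essentially the same route as the paper's proof: both reduce to the single-factor case and then observe that $p$ is a non-trivial polynomial on the irreducible set $\Pi_{j=1}^k SH(\mathbb{C}^{n_j})$, so its zero set has strictly smaller dimension. The paper simply asserts the irreducibility of this product of spheres, whereas you supply a justification via complexification and treat the degenerate case $n_j=1$ explicitly; this extra care is harmless and, if anything, plugs a small gap in the paper's presentation.
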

\begin{proof}
The equation $p_i$ just involves the variables $(O^i_1,\dots,O^i_k)$ of the  $i$-th factor of $(\Pi_{i=1}^kH(\mathbb{C}^{n_i}))^m$. Thus, it suffices to prove that, for given non-zero $X\in H(\h)$, imposing the equation
\begin{align}\label{abc}
p((O_1,\dots,O_k)):=\text{tr}((O_1\otimes\dots\otimes O_k)X)=0
\end{align}
on $\Pi_{i=1}^kSH(\mathbb{C}^{n_i})$ decreases the dimension by at least one.

In order to see that this is true, note that there are $(O_1,\dots,O_k)\in\Pi_{i=1}^kSH(\mathbb{C}^{n_i})$ such that $\text{tr}((O_1\otimes\dots\otimes O_k)X)\neq 0$ because $\bigotimes_{i=1}^kH(\mathbb{C}^{n_i})$ has a basis of normalized local operators and $X\neq 0$. But then, the equation \eqref{abc} is a non-trivial algebraic equation on the irreducible algebraic set $\Pi_{i=1}^kSH(\mathbb{C}^{n_i})$ and thus the dimension has to decrease since for a proper algebraic subset $V$ of an irreducible algebraic set $W$ we have $\dim V<\dim W$.
\end{proof}
\begin{remark}
By going along the lines of the this proof, it is easily seen that Lemma \ref{LEMlocal} also holds when going from hermitian matrices to traceless hermitian matrices, i.e. if we replace $(\Pi_{i=1}^kSH(\mathbb{C}^{n_i}))^m$ by $(\Pi_{i=1}^kSH(\mathbb{C}^{n_i})_0)^m$. Furthermore, the proof of Theorem \ref{THMlocal} also holds when going from $(\Pi_{i=1}^kSH(\mathbb{C}^{n_i}))^m$ to $(\Pi_{i=1}^kSH(\mathbb{C}^{n_i})_0)^m$ and considering $H_{loc,0}(\h):=\{O_1\otimes \hdots\otimes O_k:O_i\in SH(\mathbb{C}^{n_i})_0\}$ instead of $H_{loc}(\h)$.
\end{remark}
Now we can give the proof of Theorem \ref{THMlocal}.
\begin{proof}
Let $\psi$ be the map defined in \eqref{psi}. We can assume that $\mathcal{D}$ is a closed subset of $SH(\h)$ because if not we can replace it by the closure of $\psi(\mathcal{D})$ without increasing its dimension \footnote{See remark after Lemma \ref{stab} for more details.}. Let $\mathcal{M}$ be the semi-algebraic set obtained from $(\Pi_{i=1}^kH(\mathbb{C}^{n_i}))^m\times \mathcal{D}$ by imposing the equations \eqref{eqnblaaaaa}.

For $m>\dim\mathcal{D}$ we get $\dim \pi_1(\mathcal{M})<\dim (\Pi_{i=1}^kSH(\mathbb{C}^{n_i}))^m$ by Lemma \ref{LEMlocal}. 

Now consider $\theta(\pi_1(\mathcal{M}))$ where 
\begin{align*}
\theta:(\Pi_{i=1}^kSH(\mathbb{C}^{n_i}))^m&\to(H_{loc}(\h))^m,\\
 (O^1_1,\dots,O^1_k),\dots,(O^m_1,\dots,O^m_k)&\mapsto (O^1_1\otimes\dots\otimes O^1_k),\dots,(O^m_1\otimes\dots\otimes O^m_k).
\end{align*}
Note that $\theta$ is a surjective semi-algebraic map and thus $(H_{loc}(\h))^m$ is semi-algebraic with $\dim((H_{loc}(\h))^m)\le \dim((\Pi_{i=1}^kSH(\mathbb{C}^{n_i}))^m).$ Furthermore, $\theta$ is injective when restricting to positive matrices and hence $d:=\dim (H_{loc}(\h))^m=\dim(\Pi_{i=1}^kSH(\mathbb{C}^{n_i}))^m$. 

Finally, since $\dim \pi_1(\mathcal{M})<d$ and $\theta$ is semi-algebraic, we have $\dim\left(\theta(\pi_1(\mathcal{M}))\right)<d$ and thus $\theta(\pi_1(\mathcal{M}))$ has zero $d$-dimensional Hausdorff measure. Stability follows directly from Lemma \ref{stab}.
\end{proof}

\subsection*{Proof of Theorem \ref{tightframe}}\label{prooftightframe}
{
\begin{proof}
By going along the lines of the proof of Lemma \ref{set}, it is easily seen that $\mathcal{D}:=\{X\in\mathcal{P}_{2r}(\h):\ \text{tr}(X^2)=2\}$ represents $\Delta(\mathcal{P}_r^n)-\{0\}$ and furthermore we have $\dim \mathcal{D}=4r(n-r)-1$ by Corollary \ref{corrank} \footnote{Note that the definition of a representing set naturally generalizes to subsets $\mathcal{R}\subseteq H(\mathbb{C}^n)$.}. Applying Theorem \ref{THMvN} \footnote{Theorem \ref{THMvN} also applies in this situation. See the remark after proof of Theorem \ref{THMvN} for more details. } to the set $\mathcal{D}$ then concludes the proof.
\end{proof}
}

\appendix
\section{Hausdorff Measure on Semi-Algebraic Sets}\label{appendixD}
The term ''almost'' all used in many of the results of the present article refers to the Hausdorff measure on real affine space. In this section we define the Hausdorff measure and we prove the well-known fact that a semi-algebraic set of dimension $d$ has zero $(d+1)$-dimensional Hausdorff measure.

For a non-empty subset $A\subseteq \mathbb{R}^n$ the diameter of $S$ is defined by $\text{diam}(S):=\sup\{\|x-y\|_2:x,y\in S\}$.

Let $m\in\mathbb{R}$. For an arbitrary subset $S\subseteq\R^n$ the $m$-dimensional Hausdorff measure $\mu_H^m(S)$ is defined by (see Section 2.3 of \cite{morgan2008geometric})
\begin{align*}
\mu_H^m(S)=\lim_{\delta\to 0}\inf\{\sum_{i=1}^\infty (\text{diam}(S_i))^m:S\subseteq\cup_{i\in\mathbb{N}}(S_i),\ \text{diam}S_i<\delta\}.
\end{align*}

\begin{proposition}\label{propdim}
Let $m>n$. A semi-algebraic set $S$ of dimension $n$ has zero $m$-dimensional Hausdorff measure.
\end{proposition}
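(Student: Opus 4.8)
The plan is to combine two standard facts about Hausdorff measure with a semi-algebraic parametrization of $S$ by open boxes of dimension at most $n$. The first fact I would record is the scaling behaviour under Lipschitz maps: if $f\colon A\to\mathbb{R}^N$ is Lipschitz with constant $L$ on $A\subseteq\mathbb{R}^d$, then $\mu_H^m(f(A))\le L^m\mu_H^m(A)$. This is immediate from the definition, since for any admissible cover $\{S_i\}$ of $A$ the sets $\{f(S_i)\}$ cover $f(A)$ and satisfy $\text{diam}(f(S_i))\le L\,\text{diam}(S_i)$, so that $\sum_i(\text{diam}\,f(S_i))^m\le L^m\sum_i(\text{diam}\,S_i)^m$. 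The second fact is that any bounded subset $B$ of $\mathbb{R}^d$ has $\mu_H^m(B)=0$ whenever $m>d$: covering the unit cube $[0,1]^d$ by $k^d$ subcubes of side $1/k$ and diameter $\sqrt{d}/k$ gives $\sum_i(\text{diam}\,S_i)^m=d^{m/2}k^{d-m}\to 0$ as $k\to\infty$, because $d-m<0$, and the same estimate applies to any bounded $B$ after a translation and rescaling.

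Next I would invoke the structure theory of semi-algebraic sets. By the finite cell decomposition into Nash cells (see \cite{bochnak1998real}), the $n$-dimensional semi-algebraic set $S$ is a finite disjoint union $S=\bigcup_{i=1}^p C_i$ of semi-algebraic pieces, each of which is Nash-diffeomorphic to an open cube $(0,1)^{d_i}$ with $d_i=\dim C_i\le n<m$. Fixing such a diffeomorphism $\phi_i\colon(0,1)^{d_i}\to C_i\subseteq\mathbb{R}^N$, the map $\phi_i$ is $C^\infty$ and hence Lipschitz on every compact subcube. Writing the open cube as a countable union of compact subcubes $K_{i,j}$, with $\phi_i$ having some Lipschitz constant $L_{i,j}$ on each $K_{i,j}$, the two ingredients of the previous paragraph combine to give $\mu_H^m(\phi_i(K_{i,j}))\le L_{i,j}^m\,\mu_H^m(K_{i,j})=0$, since $K_{i,j}\subseteq\mathbb{R}^{d_i}$ with $d_i<m$.

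Finally, countable subadditivity of $\mu_H^m$ yields $\mu_H^m(C_i)\le\sum_j\mu_H^m(\phi_i(K_{i,j}))=0$ for each $i$, and then finite subadditivity over the $p$ cells gives $\mu_H^m(S)\le\sum_{i=1}^p\mu_H^m(C_i)=0$, as desired. The only place that requires genuine care — the main obstacle — is that the parametrizing maps $\phi_i$ are merely \emph{locally} Lipschitz on the open cube rather than globally Lipschitz, so one cannot apply the scaling estimate directly to all of $(0,1)^{d_i}$; this is precisely why the passage through a countable exhaustion by compact subcubes, together with the countable subadditivity of Hausdorff measure, is essential. Everything else is routine bookkeeping, and the decomposition into boxes is quoted wholesale from the semi-algebraic machinery of \cite{bochnak1998real}.
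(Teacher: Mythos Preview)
Your proof is correct and follows essentially the same route as the paper: decompose $S$ into finitely many Nash cells diffeomorphic to open cubes $(0,1)^{d_i}$ with $d_i\le n$, then use that the parametrizing diffeomorphisms are only locally Lipschitz, so a countable exhaustion plus subadditivity is needed. The only cosmetic difference is that the paper extracts its countable cover via open neighbourhoods on which $\phi_i$ is Lipschitz together with second countability of the ambient space, whereas you exhaust $(0,1)^{d_i}$ directly by compact subcubes; your version is arguably cleaner and also spells out the two background facts (Lipschitz scaling of $\mu_H^m$ and $\mu_H^m=0$ on bounded sets in $\mathbb{R}^d$ for $m>d$) that the paper leaves implicit.
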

\begin{proof}
Every $n$-dimensional semi-algebraic set $S$ can be expressed as $S=\bigcup_{i=1}^k S_i$ for some $k\in \mathbb{N}$ where the $S_i$ are diffeomorphic to $(0,1)^{n_i}$, $n_i\leq n$ (see Proposition 2.9.10 of \cite{bochnak1998real}). Let us denote these diffeomorphisms by $\phi_i:(0,1)^{n_i}\to S_i$. Since $S$ is a finite union it suffices to prove that the $m$-dimensional Hausdorff measure of $S_i$ is zero for $m>n$. 

For each point $p\in S_i$, there is a neighbourhood $N_p$ of $p$ such that $\phi_i|_{N_p}$ is Lipschitz. Constructing such neighbourhoods for all $p\in S_i$, we obtain an open cover of $S_i$ by the open sets $\{N_p\}_{p\in S_i}$ and since $\R^n$ is second countable there is a countable subcover $\{S_i\cap N_{p_j}\}_{j\in\mathbb{N}}$. 

Finally, we just have to see that the Hausdorff measure of $N_{p_j}$ is zero for all $j\in\mathbb{N}$. But $\phi_i(N_{p_j})$ is the image of a set of zero $m$-dimensional Hausdorff measure under a Lipschitz map and thus $\phi_i(N_{p_j})$ has zero $m$-dimensional Hausdorff measure as well.
\end{proof}
\begin{remark}
Note that this proof in particular  shows that the $n$-dimensional Hausdorff measure of an $n$-dimensional semi-algebraic set does not vanish and hence it is a suitable measure for our purposes.
\end{remark}
The set of measurement schemes always is a semi-algebraic subset $S$ of a real affine space and the measure we choose for $S$ is the $m$-dimensional Hausdorff measure where $m$ is the dimension of $S$. If we say that almost all elements of an $m$-dimensional semi-algebraic set $S$ has a certain property we mean that it fails to hold on a subset $A\subseteq S$ that has $m$-dimensional Hausdorff measure zero. We do this by showing that the algebraic dimension of $A$ is smaller than $m$ and applying Proposition \ref{propdim}.




\bibliographystyle{unsrt}
\bibliography{bibliography}

\end{document}